\renewcommand{\d}{\mathrm{d}}
\newcommand{\scri}{{\mathscr I}}
\newcommand{\hook}{{\setlength{\unitlength}{11pt}   % adjust pt size here
                   \begin{picture}(.833,.8)
                   \put(.15,.08){\line(1,0){.35}}
                   \put(.5,.08){\line(0,1){.5}}
                   \end{picture}}}
\newtheorem{definition}{Definition}
\newtheorem{theorem}{Theorem}
\newtheorem{proposition}{Proposition}
\newtheorem{lemma}{Lemma}
\newtheorem{remark}{Remark}
\begin{document}
%\selectlanguage{english}
\mbox{} \thispagestyle{empty}

\begin{center}
\bf{\Huge Conformal scattering theory for the linearized gravity fields on Schwarzschild spacetime} \\

\vspace{0.1in}

{PHAM Truong Xuan\footnote{Departement of informatics, faculty of mathematics, Thuyloi University, 175 Tay Son, Dong Da, Ha Noi, Viet Nam. Email~: xuanpt@tlu.edu.vn or phamtruongxuan.k5@gmail.com}}
\end{center}

{\bf Abstract.} We provide in this paper a first step to obtain the conformal scattering theory for the linearized gravity fields on the Schwarzschild spacetime by using the conformal geometric approach. We will show that the existing decay results for the solutions of the Regge-Wheeler and Zerilli equations obtained recently by L. Anderson, P. Blue and J. Wang \cite{ABlu} is sufficient to obtain the conformal scattering.

{\bf Keywords.} Conformal scattering, Goursat problem, black holes, linearized gravity fields, Regge-Wheeler equation, Zerilli equation, Schwarzschild metric, null infinity, conformal compactification.

{\bf Mathematics subject classification.} 35L05, 35P25, 35Q75, 83C57.

\tableofcontents

\section{Introduction}
Conformal scattering is a geometric approach to the scattering theory. The idea was started by Penrose \cite{Pe1964} by constructing the conformal compactifications of the spacetimes. However, the formulation of theory is established really by Friedlander. First, he gave the notion of radiation fields \cite{Fri1962,Fri1964,Fri1967}, then he constructed initially the conformal scattering theory \cite{Fri1980} to the linear wave equations on a static spacetime where the metric approaches the flat metric fast enough. The method is based on solving the characteristic Cauchy problem i.e Goursat problem on the null infinity hypersurfaces $\scri^\pm$, where the data are the radiation fields. Later, the conformal scattering theory has been constructed and developed by Baez, Segal and Zhou \cite{BaSeZho1990}, H\"ormander \cite{Ho1990} for the linear and nonlinear wave equations on the Minkowski spacetime and again by Friedlander \cite{Fri2001} for the linear wave equations asymptotically Euclidean manifolds.

In 2004, Mason and Nicolas constructed the conformal scattering for linear wave, Dirac and Maxwell equations on the asymptotical simple spacetimes i.e non static spacetimes \cite{MaNi2004}. Then Joudioux worked on the same geometrical background to produce the conformal scattering for a nonlinear wave equation \cite{Jo2012,Jo2019}.
After that, Nicolas and Mokdad constructed the conformal scattering on the detailed static spacetimes such as Schwarzschild for linear wave equations \cite{Ni2016} and on Reissner-Nordstr\"om-de Sitter for Maxwell equations \cite{Mo2019}. In their work, Nicolas and Mokdad use the energy and pointwise decay results of the fields to obtain the energy identity up to the null infinity $i^+$. Then, they solve the Goursat problem and show that the trace operator is an isomorphism. Recently, Taujanskas produces a conformal scattering of the Maxwell-scalar field system on de Sitter space \cite{Ta2019}.

Moreover, there are some related works of Dafermos et al. \cite{Da2018}, Kehle et al. \cite{Ke2019} and Angelopoulos et al. \cite{An2020} which also used the energy and pointwise decay results to construct scattering theories for the scalar wave equations on the slow Kerr spacetime, on the interior of Reissner-Nordstr\"om black hole and on extremal Reissner-Nordstr\"om respectively.

A conformal scattering theory on the asymptotic flat spacetimes consists on three steps: first, the resolution of the Cauchy problem of the rescaled equations on the rescaled spacetime, then the definition and the extension of the trace operator. Second, we prove the energy identity up to the future timelike infinity $i^+$, this shows that the extension of the trace operator is injective. Third, we solve the Goursat problem with the initial data on the conformal boundary consisting the horizon and null infinity. Thereofre, the trace operator will be subjective, hence, an isometry.  

Conformal scattering theory for the linearized gravity fields on the Minkowski spacetime is mentioned in the thesis of the author (see \cite{Pha2017}) by studying the $n/2$-spin zero rest-mass fields which are general cases of linearized gravity fields in the spin framework. The Goursat problem has been solved in the Penrose conformal compactification of the spacetime. Howerver, the conformal scattering operator in this case is evident, so the construction of the theory is not significant in the flat spacetime. Considering the problem further, in this paper, we develop the works of Nicolas and Mokdad \cite{Ni2016,Mo2019} to construct a conformal scattering for the scalar Regge-Wheeler and Zerilli equations. These are the equations governing the perturbations of the vacuum Schwarzschild metric (see \cite{ReWhe,Vis,Ze}). They are both scalar wave type with the real potentials which decay as $r^{-3}$. This leads to the conclusion that the conformal rescaled equations of the scalar Regge-Wheeler and Zerilli equations are regular and very like the rescaled wave equations.  Hence, it can be seem to construct the conformal scattering theories for these equations as well as the linear scalar wave equation (see the construction of the scalar wave equation in \cite{Ni2016}). The main obstacles are: what are the energy and pointwise decays of the solutions that are necessary to obtain the conformal scattering theory? And how we can show that the energy and pointwise decays are sufficient to obtain the energy identity up to the future timelike infinity $i^+$?

We will use the Penrose's conformal compactification of the exterior domain of the Black hole to our construction. This domain is foliated by the hypersurfaces $\left\{\mathcal{H}_{\tau}\right\}_{\tau}$. 
The hypersurface ${\mathcal{H}}_{\tau}$ tends to the null infinity $i^+$ and consists on three portions: one is parallel to the horizon $\mathfrak{H}^+$, one is parallel to $\Sigma_0$ and the last is parallel to $\scri^+$.
We will work on a global time-like vector field $K$ which transverses the horizon and equals to $\partial_t$ as the radial parameter $r$ sufficient large (see Section 2.2 and more details in \cite{ABlu}). This time-like vector field plays an important role in the study of the decay results of the linearized gravity fields. We also use the time-like Killing vector field $\partial_t$ to obtain the energy of the fields on the Cauchy hypersurface $\Sigma_T=\left\{ t=T \right\}$ in the construction of the theory. The Cauchy problem for the equations and the rescaled ones are obtained by Leray's theorem and the energy identity such as the previous work for non-linear Klein Gordon equations in \cite{Ni1995}. Therefore, we can define the trace operator, then it can be extended by density (see Section 3.2). 

Using the energy and pointwise decays obtained in the recent work of Anderson et al. \cite{ABlu} we will prove that the energy across the hypersurface $\mathcal{H}_T$ tends to zero when $T$ tends to infinity. This leads to the energy indentity up to $i^+$ i.e the energy of the fields restricted on the initial hypersurface $\Sigma_0=\left\{ t=0 \right\}$ equals to the sum of energies of the fields restricted on the future horizon $\mathfrak{H}^+$ (resp. $\mathfrak{H}^-$) and on the future null infinity $\scri^+$ (resp. $\scri^-$). 
As a consequence, the extended trace operator is injective (see Section 3.3).

To prove that the trace operator is subjective, we solve the Goursat problem with the smooth compactly supported initial data on the conformal boundary. This work is done by applying the results of H\"ormander on Schwarzschild spacetime, the method is
similar to the scalar wave equation in \cite{Ni2016} (see Section 4).\\ 
{\bf Notation.}\\ 
We denote the space of smooth compactly supported scalar functions on $\mathcal{M}$ (a smooth manifold without boundary) by $\mathcal{C}_0^\infty(\mathcal{M})$ and the space of distributions on $\mathcal{M}$ by $\mathcal{D}'(\mathcal{M})$.
%$\bullet$ Let $f(x)$ and $g(x)$ be two real functions. We write $f \lesssim g$ if there exists a constant $C \in (0,+\infty)$ such that $f(x)\leq C g(x)$ for all $x$, and write $f\simeq g$ if both $f\lesssim g$ and $g\lesssim f$ are valid.

\section{Geometrical and analytical setting} 

\subsection{Schwarzschild metric and Penrose's conformal compactification}
Let $({\cal{M}}=\mathbb{R}_t\times ]0,+\infty[_r\times S^2_\omega,g)$ be a $4$-dimensional Schwarzschild manifold, equipped with the Lorentzian metric $g$ given by
$$g = F \d t^2 - F^{-1}\d r^2 - r^2\d\omega^2, \, F=F(r)=1-\frac{2M}{r},$$
where $\d \omega^2$ is the Euclidean metric on $S^2$, and $M>0$ is the mass of the black hole. In this paper, we work on the exterior of the black hole $\mathcal{B}_I:=\left\{ r>2M \right\}$.

We denote the Regge-Wheeler variable by $r_*=r+2M \log(r-2M)$. We have that $\d r=F\d r_*$. In the coordinates $(t,r_*,\omega^2)$, the Schwarzschild metric takes the form
$$g = F(\d t^2- \d r_*^2) - r^2\d\omega^2.$$
The retarded and advanced Eddington-Finkelstein coordinates $u$ and $v$ are defined by
$$u=t-r_*, \, v= t+r_*.$$
Then, In the coordinates $(u,v,\omega^2)$ the Schwarzschild metric takes the form
$$g = F\d u \d v - r^2\d\omega^2.$$
The outgoing and incoming principal null directions are respectively,
$$l =\partial_v = \partial_t + \partial_{r_*} \hbox{  and  } n = \partial_u = \partial_t - \partial_{r_*}.$$

By taking $\Omega = 1/r$ and $\hat{g} = \Omega^2g$, we yield a compactification of the exterior domain in the coordinates $u, \, R = 1/r, \, \omega$ that is $\left(\mathbb{R}_u\times \left[0,\dfrac{1}{2M}\right] \times S^2_\omega, \hat{g} \right)$ with the rescaled metric 
\begin{equation}
\hat{g} = R^2(1-2MR) \d u^2 - 2\d u\d R - \d\omega^2.
\end{equation}
Future null infinity $\scri^+$ and the past horizon $\mathfrak{H}^-$ are null hypersurfaces of the rescaled spacetime
$$\scri^+ = \mathbb{R}_u \times \left\{ 0\right\}_R \times S_\omega^2, \, \mathfrak{H}^- = \mathbb{R}_u \times \left\{ 1/2M\right\}_R \times S^2_\omega.$$
By using the advanded coordinates $v, \, R, \, \omega$, the rescaled metric $\hat{g}$ takes the form
\begin{equation}
\hat{g} = R^2(1-2MR)\d v^2 + 2 \d v \d R - \d\omega^2.
\end{equation}
Past null infinity $\scri^-$ and the future horizon $\mathfrak{H}^+$ described as the null hypersurfaces
$$\scri^- = \mathbb{R}_v \times \left\{ 0\right\}_R \times S_\omega^2, \, \mathfrak{H}^+ = \mathbb{R}_v \times \left\{ 1/2M\right\}_R \times S^2_\omega.$$
The Penrose conformal compactification of $\mathcal{B}_I$ is 
$$\bar{\mathcal{B}_I} = \mathcal{B}_I \cup \scri^+ \cup \mathfrak{H}^+ \cup \scri^-\cup \mathfrak{H}^-\cup S_c^2,$$
where $S_c^2$ is the crossing sphere.

The volume forms associated to the Schwarzschild metric $g$ and the rescaled metric $\hat{g}$ are given by
$$\mathrm{dVol}_g = r^2F \d t \wedge \d r_* \wedge \d^2\omega = \frac{r^2F}{2}\d u \wedge\d v \wedge \d^2\omega$$
and 
$$\mathrm{dVol}_{\hat g} = R^2F \d t \wedge \d r_* \wedge \d^2\omega = \frac{R^2F}{2} \d u \wedge \d v \wedge \d^2\omega,$$
respectively, where $\d^2\omega$ is the euclidean area element on the unit $2$-sphere $S^2$.

\subsection{Foliation and timelike vector fields}
Following \cite{ABlu}, we take $r_{NH}$ and $r_{FH}$ such that $2M<r_{NH}<3M < r_{FH}$, where $r_{FH}$ is sufficiently large. The future $\mathcal{J}^+(\Sigma_0)\, (\hbox{where } \Sigma_0=\left\{ t=0 \right\})$ is foliated by the hypersurfaces $\mathcal{H}_\tau:=\Sigma_\tau^i \cup \Sigma_\tau^e \cup \mathcal{N}_\tau\, \,  (\tau \geq 0)$, where,
$$\Sigma_\tau^i = \left\{ v=\tau + (r_{NH})_* \right\} \cap \left\{ r<r_{NH}\right\},$$
$$\Sigma_\tau^e = \left\{ t=\tau\right\} \cap \left\{ r_{NH} \leq r\leq r_{FH} \right\}$$
and
$$\mathcal{N}_\tau = \left\{ u=\tau - (r_{FH})_* \right\} \cap \left\{ v\geq \tau + (r_{FH})_* \right\}.$$
\begin{figure}
\begin{center}
\includegraphics[scale=0.5]{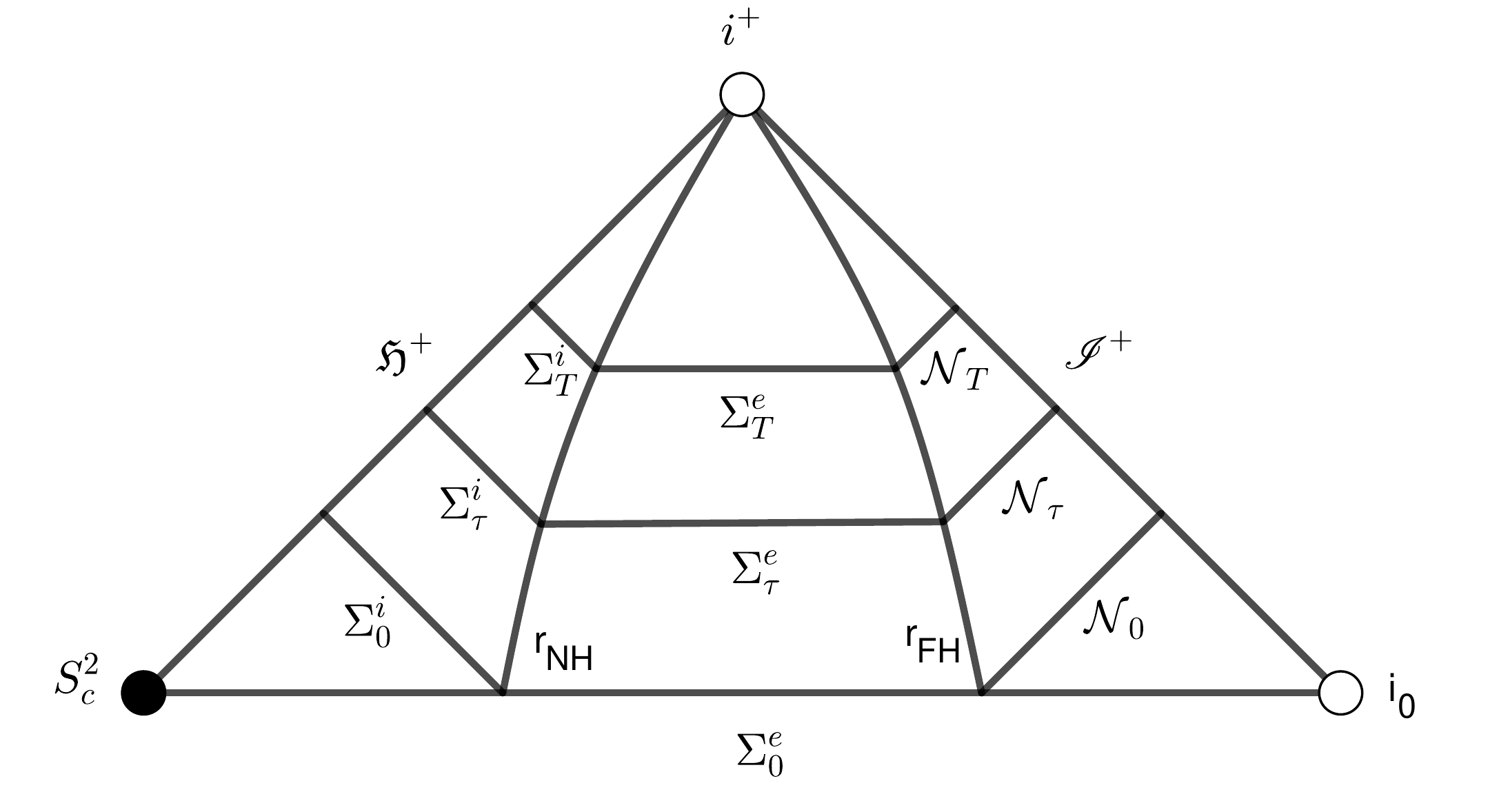}
\caption{Foliation $\left\{\mathcal{H}_\tau\right\}_{\tau\geq 0}$ of the Penrose's conformal compactification of Block I.}
\end{center}
\end{figure}
We known that $T=\partial_t$ is only time-like killing vector field on the exterior domain of the black hole $\left\{ r>2M \right\}$. To obtain the decay results for Equations \eqref{nonlinearequation} on the hypersurface $\mathcal{H}_\tau$, we need to choose and work on a global time-like vector field $K$ which is transverse to the horizon and equal to $\partial_t$ for $r$ large enough. The vector field $K$ is defined in $u,\,v,\,\omega$ coordinates by
\begin{align}\label{timelike}
K &= \partial_u + \partial_v + \frac{y_1(r)}{1-2M/r}\partial_u + y_2(r)\partial_v\nonumber\\
&= \left( 1+ \frac{y_1(r)}{F} \right)\partial_u + (1+y_2(r))\partial_v,
\end{align}
where $y_1, \, y_2>0$ are supported near the event horizon (where $r_+<r<r_{NH}$) and $y_1=1, \, y_2=0$ at the event horizon. 

\subsection{The equations}
The study of the equations governing the perturbations of the vacuum Schwarzschild
metric was initiated by Regge-Wheeler \cite{ReWhe} and then completed by Vishveshwara
\cite{Vis} and Zerilli \cite{Ze}. The perturbations of odd and even parity were treated
separately. The perturbations of odd parity are governed by the Regge-Wheeler
equation. Zerilli proved by decomposing into spherical harmonics (belonging to the different $l$ values), that the even parity perturbations are governed by the Zerilli equation. The relations between the Regger-Wheeler and Zerilli equations and the spin $\pm 2$ Teukolsky equations are detailed in \cite{ABlu2,Da2019,Da2020}.
 
In this paper, we will construct conformal scattering theories for the Regge-Wheeler and Zerilli equations and this is a first step to construct the conformal scattering theories of the linearized gravity fields.

The Regge-Wheeler and Zerilli equations have the following forms respectively
\begin{equation*}
\Box_g\psi - \frac{8M}{r^3}\psi =0
\end{equation*}
and
\begin{equation*}
\Box_g\psi - \frac{8M}{r^3} \frac{(2\bar{\lambda}+3)(2\bar{\lambda}r+3M)r}{4(\bar{\lambda}r+3M)^2}\psi =0,
\end{equation*}
where $2\bar{\lambda} = (l-1)(l+2)\geq 4$ (with $l\geq 2$) and the D'Alembertian for the Schwarzschild metric $g$ in variables $(t,\, r_*, \, \omega)$ is given by
$$\Box_g = \frac{1}{F}\left( \frac{\partial^2}{\partial t^2} - \frac{1}{r^2}\frac{\partial}{\partial r_*} r^2 \frac{\partial}{\partial r_*} \right) - \frac{1}{r^2}\Delta_{S^2}.$$
These equations are related by the Chandrasekhar transformation (see \cite{Cha1975,Cha}).

These two equations can be written together in a unique equation as
\begin{equation}\label{nonlinearequation}
\Box_g\psi + \frac{f(r,M)}{r^3}\psi = 0,
\end{equation}
where $f(r,M)$ is a negative, bounded lower and of order zero in $r$. In the Penrose's conformal compactification domain, the rescaled equations are
\begin{equation}\label{rescaledequation}
\Box_{\hat g}\hat{\psi} + (2M+f(r,M))R\hat{\psi} = 0,
\end{equation}
where $\hat{\psi} = \Omega^{-1}\psi=r\psi$ and the D'Alembertian for the rescaled metric $\hat{g}$ in variables $(t,\, r_*, \, \omega)$ is
$$\Box_{\hat g} = \frac{r^2}{F}\left( \frac{\partial^2}{\partial t^2} - \frac{\partial}{\partial r_*} \right) - \Delta_{S^2}.$$
We will construct a conformal scattering for the scalar fields which are the solutions of Equation \eqref{nonlinearequation} via the conformal approach to Equation \eqref{rescaledequation}.
We use the stress-energy tensor for the rescaled Equation \eqref{rescaledequation} as the one for the original linear wave equation
$$\hat{T}_{ab} = \hat{\nabla}_a\hat\psi \hat{\nabla}_b\hat\psi - \frac{1}{2}\left<\hat{\nabla}\hat\psi,\hat{\nabla}\hat\psi\right>_{\hat g}\hat{g}_{ab}.$$
The divergence of $\hat{T}$ is given by
$$\hat{\nabla}^a\hat{T}_{ab} = \left( \Box_{\hat g}\hat{\psi} \right)\hat{\nabla}_b\hat{\psi} = -(2M+f(r,M))R\hat{\psi} \hat{\nabla}_b(\hat\psi).$$
The energy current is obtained by contracting $\hat{T}$ with the time-like vector field $T=\partial_t$ i.e.,
$$\hat{J}_a = T^b\hat{T}_{ab}.$$
By taking $V= (M+f(r,M)/2)R \hat{\psi}^2\partial_t$, since $\hat{\Gamma}^{\bf a}_{\bf {a} 0}=0$ in the coordinate system $(t,\, r, \, \theta, \, \varphi)$, we have
$$\mathrm{div}V= \hat{\nabla}_aV^a = \frac{\partial}{\partial t}\left( (M+f(r,M)/2)R \hat{\psi}^2 \right) + \hat{\Gamma}^{\bf a}_{\bf {a} 0}V^0 = (2M+f(r,M))R\hat{\psi}\partial_t\hat\psi.$$
As a consequence we obtain the conservation law of $\hat{J}^a+V^a$ that is
\begin{equation}\label{conser}
\hat{\nabla}_a (\hat{J}^a + V^a) = 0, \hbox{ with } V= (M+f(r,M)/2)R \hat{\psi}^2\partial_t.
\end{equation}

\section{Energy identities}

\subsection{Energy identity up to $\mathcal{H}_T$}
We follow the convention used by by Penrose and Rindler \cite{PeRi84} about the Hodge dual of a 1-form $\alpha$ on a spacetime $({\cal M},g)$ (i.e. a $4-$dimensional Lorentzian manifold that is oriented and time-oriented)
\begin{equation*}
(*\alpha)_{abcd} = e_{abcd}{\alpha}^d,
\end{equation*}
where $e_{abcd}$ is the volume form on $({\cal M},g)$, denoted simply by $\mathrm{dVol}_g$. We shall use the following differential operator of the Hodge star
\begin{equation*}
\d *\alpha = -\frac{1}{4}(\nabla_a\alpha^a)\mathrm{{dVol}}_g.
\end{equation*}
If ${S}$ is the boundary of a bounded open set $\Omega$ and has outgoing
orientation, using Stokes theorem, we have
\begin{equation}\label{Stokesformula}
-4\int_{{S}}*\alpha = \int_{\Omega}(\nabla_a\alpha^a)\mathrm{{dVol}}_g.
\end{equation}
Therefore, for a solution $\hat\psi$ of the rescaled Equations \eqref{rescaledequation} with smooth and compactly supported initial data, we define the rescaled energy flux associated with $\partial_t$, across an oriented hypersurface $S$ by $\hat{\mathcal{E}}^{\partial_t}_S$:
$$\hat{\mathcal{E}}^{\partial_t}_S(\hat\psi) = -4\int_S *(\hat{J}_a + V_a)\d x^a = \int_S (\hat{J}_a+V_a)\hat{N}^a\hat{L}\hook \mathrm{{dVol}}_{\hat g},$$
where $\hat{L}$ is a transverse vector to $S$ and $\hat{N}$ is the normal vector field to $S$ such that
$\hat{L}^a\hat{N}_a=1$.
\begin{proposition}
Consider the smooth and compactly supported initial data on $\Sigma_0$, we can define the energy fluxes of the rescaled solution $\hat{\psi}$ across the null conformal boundary $\mathfrak{H}^+\cup \scri^+$ by $\hat{\mathcal{E}}^{\partial_t}_{\scri^+}(\hat\psi) + \hat{\mathcal{E}}^{\partial_t}_{\mathfrak{H}^+}(\hat\psi) := \lim_{T\rightarrow\infty}\hat{\mathcal{E}}^{\partial_t}_{\scri_T^+}(\hat\psi) + \hat{\mathcal{E}}^{\partial_t}_{\mathfrak{H}_T^+}(\hat\psi)$. Moreover, we have
$$\hat{\mathcal{E}}^{\partial_t}_{\scri^+}(\hat\psi) + \hat{\mathcal{E}}^{\partial_t}_{\mathfrak{H}^+}(\hat\psi) \leq \hat{\mathcal{E}}^{\partial_t}_{\Sigma_0}(\hat{\psi}),$$
where the equality holds if and only if $\lim_{T\rightarrow \infty}\hat{\mathcal{E}}^{\partial_t}_{\mathcal{H}_T}(\hat\psi) = 0$. 
\end{proposition}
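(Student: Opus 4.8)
The plan is to exploit the conservation law \eqref{conser}, which states that $\hat\nabla_a(\hat J^a + V^a) = 0$ throughout the rescaled spacetime. The strategy is a standard energy-flux argument on the null-foliated region bounded by the initial Cauchy surface $\Sigma_0$, the hypersurface $\mathcal H_T$, and the relevant portions of the conformal boundary up to parameter $T$. First I would fix a finite $T$ and apply the Stokes-type identity \eqref{Stokesformula} to the bounded region $\Omega_T$ enclosed by $\Sigma_0$, $\scri^+_T$, $\mathfrak H^+_T$, and $\mathcal H_T$. Since the integrand $\nabla_a(\hat J^a + V^a)$ vanishes identically by \eqref{conser}, the right-hand side is zero, and Stokes' theorem gives the exact balance of boundary fluxes:
\begin{equation*}
\hat{\mathcal E}^{\partial_t}_{\Sigma_0}(\hat\psi) = \hat{\mathcal E}^{\partial_t}_{\scri^+_T}(\hat\psi) + \hat{\mathcal E}^{\partial_t}_{\mathfrak H^+_T}(\hat\psi) + \hat{\mathcal E}^{\partial_t}_{\mathcal H_T}(\hat\psi),
\end{equation*}
provided the orientations are chosen consistently so that each flux appears with the correct sign.

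The key analytic input I would verify before writing down this identity is that each of the three boundary fluxes is \emph{nonnegative}. This requires checking that the energy flux of $\hat J_a + V_a$ across each of these hypersurfaces, contracted with the future-oriented normal, is a nonnegative quantity. For the null hypersurfaces $\scri^+$ and $\mathfrak H^+$ this is the content of the dominant-energy-type positivity of the rescaled stress-energy tensor $\hat T_{ab}$ contracted with two causal vectors; the correction term $V_a$ must be handled separately, and here I would use that $f(r,M)$ is bounded below and that on $\scri^+$ one has $R = 0$ so the correction degenerates favourably, while on $\mathfrak H^+$ the sign of $V$ restricted to the horizon must be controlled using the sign assumptions on $y_1, y_2$ and the boundedness of $f$. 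On the spacelike-to-null hypersurface $\mathcal H_T$, positivity should follow from the fact that $K$ (and likewise the normal built from $\partial_t$) is causal and future-directed along the foliation.

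Granting positivity of $\hat{\mathcal E}^{\partial_t}_{\mathcal H_T}(\hat\psi)$, the exact identity immediately yields the inequality
\begin{equation*}
\hat{\mathcal E}^{\partial_t}_{\scri^+_T}(\hat\psi) + \hat{\mathcal E}^{\partial_t}_{\mathfrak H^+_T}(\hat\psi) \leq \hat{\mathcal E}^{\partial_t}_{\Sigma_0}(\hat\psi)
\end{equation*}
for every finite $T$. Since the left-hand side is monotone nondecreasing in $T$ (each new sliver of $\scri^+$ and $\mathfrak H^+$ contributes a nonnegative amount) and bounded above by the fixed right-hand side, the limit as $T \to \infty$ exists; this justifies the definition of $\hat{\mathcal E}^{\partial_t}_{\scri^+}(\hat\psi) + \hat{\mathcal E}^{\partial_t}_{\mathfrak H^+}(\hat\psi)$ claimed in the statement and gives the desired inequality in the limit. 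Finally, passing to the limit in the exact identity shows that equality holds precisely when $\lim_{T\to\infty}\hat{\mathcal E}^{\partial_t}_{\mathcal H_T}(\hat\psi) = 0$, which is exactly the asserted equivalence.

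The main obstacle I anticipate is establishing the sign of the flux contributions from the correction vector field $V = (M + f(r,M)/2)R\,\hat\psi^2\,\partial_t$. Because $f$ is only assumed negative, bounded below, and of order zero, the coefficient $M + f/2$ need not have a fixed sign, so the positivity of the total current $\hat J_a + V_a$ across each piece of the boundary is not automatic and will require a careful case analysis near and away from the horizon, using the explicit degeneration $R \to 0$ at null infinity and the structure of the foliation near $\mathfrak H^+$. Once these sign properties are pinned down, the remainder of the argument is the routine monotonicity-plus-boundedness passage to the limit.
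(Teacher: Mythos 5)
Your overall strategy coincides with the paper's: integrate the conservation law \eqref{conser}, apply Stokes' formula \eqref{Stokesformula} on the region bounded by $\Sigma_0$, $\mathfrak H^+_T$, $\mathcal H_T$ and $\scri^+_T$ to get the exact identity \eqref{energyidentity}, then use nonnegativity and monotonicity of the boundary fluxes to pass to the limit. The limiting argument and the equality criterion are handled exactly as in the paper.

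However, there is a genuine gap at the step you yourself flag as the main obstacle, and the route you sketch for closing it would not work. First, the contribution of $V$ to the fluxes across $\scri^+_T$ and $\mathfrak H^+_T$ does not require any case analysis: $V$ is proportional to $\partial_t$, which is null on both $\scri^+$ and $\mathfrak H^+$, and the normal there is again $\partial_t$ (i.e.\ $\partial_u$ resp.\ $\partial_v$), so $V_a\hat N^a=0$ identically; the paper's explicit computation then exhibits these fluxes as $\int(\partial_u\hat\psi)^2\,\d u\,\d^2\omega$ and $\int(\partial_v\hat\psi)^2\,\d v\,\d^2\omega$, which are manifestly nonnegative and increasing in $T$. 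Second, and more seriously, the positivity of $\hat{\mathcal E}^{\partial_t}_{\Sigma_0}$ and of $\hat{\mathcal E}^{\partial_t}_{\mathcal H_T}$ is \emph{not} a pointwise dominant-energy statement and cannot be obtained by a case analysis in $r$: the integrand in \eqref{resenergy} contains the term $(2M+f(r,M))FR^3\hat\psi^2$, and for Regge--Wheeler $f=-8M$, so $2M+f=-6M<0$ everywhere on the exterior. The negativity is global, not localized near the horizon or infinity. The paper's resolution is spectral: since only spherical harmonics with $l\ge 2$ occur, one has
\begin{equation*}
\int_{S^2}\left(|\nabla_{S^2}\hat\psi|^2 + f(r,M)R\,\hat\psi^2\right)\d^2\omega \ \ge\ 0,
\end{equation*}
i.e.\ the angular-derivative term absorbs the negative zeroth-order term after integration over the sphere (a Poincar\'e-type inequality using $l(l+1)\ge 6$ together with $|f|R\le 4$). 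Without this ingredient the nonnegativity of $\hat{\mathcal E}^{\partial_t}_{\mathcal H_T}$, on which your entire monotonicity-plus-boundedness argument rests, is unproven, and the causal-vector argument you propose in its place is insufficient.
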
 
\begin{proof} 
Intergrating the conservation law \eqref{conser} and by using the Stokes's formula \eqref{Stokesformula}, we get an exact energy identity between the hypersurfaces $\Sigma_0=\left\{ t=0 \right\}, \, \mathfrak{H}_T^+, \, \mathcal{H}_T$ and $\scri_T^+$ (for $T>0$) as follows 
\begin{equation}\label{energyidentity}
\hat{\mathcal{E}}^{\partial_t}_{\Sigma_0}(\hat\psi)= \hat{\mathcal{E}}^{\partial_t}_{\scri_T^+}(\hat\psi) + \hat{\mathcal{E}}^{\partial_t}_{\mathfrak{H}_T^+}(\hat\psi) + \hat{\mathcal{E}}^{\partial_t}_{\mathcal{H}_T}(\hat\psi).
\end{equation}
On $\Sigma_0$, we take
$$\hat{L}=\frac{r^2}{F}\partial_t, \, \hat{N}=\partial_t.$$
On $\scri^+$, take $\hat{L}_{\scri^+}= -\partial_R$ in $u,R,\omega$ coordinates
$$\hat{L}_{\scri^+}= \frac{r^2F^{-1}}{2}l|_{\scri^+}.$$
On $\mathfrak{H}^+$, take $\hat{L}_{\mathfrak{H}^+} = \partial_R$ in $v,R,\omega$ coordinates
$$\hat{L}_{\mathfrak{H}^+} = \frac{r^2F^{-1}}{2}n|_{\mathfrak{H}^+}.$$
Hence we have $\hat{N}=\partial_t$ on both $\mathfrak{H}^+$ and $\scri^+$. This corresponds to $\hat{N}=\partial_v$ on $\mathfrak{H}^+$ and $\hat{N}=\partial_u$ on $\scri^+$. On the other hand, the vector field $V$ is parallel to $\partial_t$ which is null on $\mathfrak{H}$ and $\scri$, then $V_a\hat{N}^a=0$ in these two cases. The transversal and normal vectors of the hypersurface $\Sigma_T^i$ (resp. $\mathcal{N}_T$) can be choosen exactly as the ones of $\scri^+$ (resp. $\mathfrak{H}^+$). From this, it follows that the the energy identity given in \eqref{energyidentity} becomes
\begin{eqnarray*}
&&\int_{\mathfrak{H}_T^+}(\hat{J}_aT^a)\hat{L}_{\mathfrak{H}^+}\hook \mathrm{dVol}_{\hat g} + \int_{\scri_T^+}(\hat{J}_aT^a)\hat{L}_{\scri^+}\hook \mathrm{dVol}_{\hat g}\\
&&+ \int_{\Sigma_T^i}(\hat{J}_aT^a)\hat{L}_{\scri^+}\hook \mathrm{dVol}_{\hat g} + \int_{\mathcal{N}_T}(\hat{J}_aT^a)\hat{L}_{\mathfrak{H}^+}\hook \mathrm{dVol}_{\hat g} + \int_{\Sigma_T^e}((\hat{J}_a+V_a)T^a)r^2F^{-1}\partial_t \hook \mathrm{dVol}_{\hat g}\\
&&=\int_{\Sigma_0}((\hat{J}_a+V_a)T^a)r^2F^{-1}\partial_t \hook \mathrm{dVol}_{\hat g}\,.
\end{eqnarray*}
The energy fluxes through $\Sigma_0$, $\scri^+_T$ and $\mathfrak{H}^+_T$ can be computed in details. We have the following.
\begin{eqnarray*}
&&\hat{\mathcal{E}}^{\partial_t}_{\Sigma_0}(\hat\psi) = \frac{1}{2}\int_{\Sigma_0} \left( \hat{\psi_t}^2 + (\hat{\psi}_{r_*})^2 + R^2F|\nabla_{S^2}\hat{\psi}|^2 + (2M+f(r,M))FR^3\hat{\psi}^2 \right) \d r_*\d^2\omega,\\
&&\hat{\mathcal{E}}^{\partial_t}_{\scri^+_T} = \int_{\scri_T^+} (\partial_u(\hat{\psi}|_{\scri^+_T}))^2\d u\d^2\omega,\\
&&\hat{\mathcal{E}}^{\partial_t}_{\mathfrak{H}^+_T} = \int_{\mathfrak{H}^+} (\partial_v(\hat{\psi}|_{\mathfrak{H}^+_T}))^2\d v\d^2\omega.
\end{eqnarray*}
Note that in both cases of Regge-Wheeler and Zerilli, the energies through $\Sigma_0$ are positive due to the fact that we consider the parameter $l\geq 2$ which ensures that (for details see page 787 in \cite{ABlu}) 
$$\int_{S^2} \left( |\nabla_{S^2}\hat{\psi}|^2 + f(r,M)R \hat{\psi}^2 \right) \d^2 \omega \geq 0. $$
These calculations lead also to the energy flux through $\mathcal{H}_T= \Sigma_T^e \cup \Sigma_T^i \cup \mathcal{N}_T$ as 
\begin{eqnarray}\label{resenergy}
\mathcal{E}^{\partial_t}_{\mathcal{H}_T}(\hat\psi) &=& \mathcal{E}^{\partial_t}_{\Sigma_T^i}(\hat\psi) + \mathcal{E}^{\partial_t}_{\Sigma_T^e}(\hat\psi) + \mathcal{E}^{\partial_t}_{\mathcal{N}_T}(\hat\psi)\nonumber\\
&=& \int_{\Sigma_T^i} \left( (\partial_u(\hat{\psi}|_{\Sigma^i_T}))^2  +  R^2F |\nabla_{S^2}\hat\psi|_{\Sigma^i_T}|^2 + (2M+f(r,M))FR^3(\hat{\psi}|_{\Sigma^i_T})^2 \right)  \d u\d^2\omega \nonumber\\
&&+ \int_{\mathcal{N}_T} \left( (\partial_v(\hat{\psi}|_{\mathcal{N}_T}))^2  +  R^2F |\nabla_{S^2}\hat\psi|_{\mathcal{N}_T}|^2 + (2M+f(r,M))FR^3(\hat{\psi}|_{\mathcal{N}_T})^2 \right)  \d v\d^2\omega \nonumber\\
&&+ \frac{1}{2}\int_{\Sigma_T^e} \left( \hat{\psi_t}^2 + (\hat{\psi}_{r_*})^2 + R^2F|\nabla_{S^2}\hat{\psi}|^2 + (2M+f(r,M))FR^3\hat{\psi}^2 \right)\d r_*\d^2\omega.
\end{eqnarray}
We observe that the energy fluxes across $\scri_T^+$ and $\mathfrak{H}_T^+$ are non negative increasing functions of $T$ and their sum is bounded by $\hat{\mathcal{E}}_{\Sigma_0}(\hat\psi)$ by the energy indentity \eqref{energyidentity}. This can be deduced from the energy identity \eqref{energyidentity}and the positivity of $\hat{\mathcal{E}}^{\partial_t}_{\mathcal{H}_T}(\hat\psi)$. Therefore, the limit of $\hat{\mathcal{E}}^{\partial_t}_{\mathcal{H}_T}(\hat\psi)$ exists and the following sum is well defined
\begin{equation}\label{limitenergy}
\hat{\mathcal{E}}^{\partial_t}_{\scri^+}(\hat\psi) + \hat{\mathcal{E}}^{\partial_t}_{\mathfrak{H}^+}(\hat\psi) := \lim_{T\rightarrow\infty}\hat{\mathcal{E}}^{\partial_t}_{\scri_T^+}(\hat\psi) + \hat{\mathcal{E}}^{\partial_t}_{\mathfrak{H}_T^+}(\hat\psi) = \hat{\mathcal{E}}^{\partial_t}_{\Sigma_0}(\hat\psi) - \lim_{T\rightarrow\infty}\hat{\mathcal{E}}^{\partial_t}_{\mathcal{H}_T}(\hat\psi).
\end{equation}
The proposition now holds from the above identity.
\end{proof}

\subsection{Function space of initial data}
The relation between the energy fluxes $\hat{\mathcal E}^{\partial_t}_{\Sigma_T}(\hat\psi)$ and $\mathcal{E}^{\partial_t}_{\Sigma_T}(\psi)$ is shown in the following lemma.
\begin{lemma}\label{EnergyInvariant}
The energy fluxes $\hat{\mathcal E}^{\partial_t}_{\Sigma_T}(\hat\psi)$ and $\mathcal{E}^{\partial_t}_{\Sigma_T}(\psi)$ are equal.
\end{lemma}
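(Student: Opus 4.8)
The statement is the conformal invariance of the $\partial_t$-energy flux through the Cauchy slice $\Sigma_T=\{t=T\}$ under the rescaling $\hat\psi=\Omega^{-1}\psi=r\psi$, $\hat g=\Omega^2 g$. The plan is to compute both fluxes explicitly in the coordinates $(t,r_*,\omega)$ and compare them term by term. The rescaled flux $\hat{\mathcal E}^{\partial_t}_{\Sigma_T}(\hat\psi)$ already has the explicit form displayed in the proof of the Proposition. For the original flux I would first set up the analogous conserved current for Equation \eqref{nonlinearequation}: contracting the stress-energy tensor $T_{ab}=\nabla_a\psi\nabla_b\psi-\frac12\langle\nabla\psi,\nabla\psi\rangle_g\,g_{ab}$ with $\partial_t$ gives the current $J_a=(\partial_t)^bT_{ab}$, whose divergence $-\frac{f}{r^3}\psi\,\partial_t\psi$ is cancelled by the correction $W=\frac{f}{2r^3}\psi^2\partial_t$, exactly as $V$ does in \eqref{conser}. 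Taking $N=\partial_t$ and the transverse vector $L=F^{-1}\partial_t$ (so that $L^aN_a=1$) and using $\mathrm{dVol}_g=r^2F\,\d t\wedge\d r_*\wedge\d^2\omega$, I would obtain
\begin{equation*}
\mathcal E^{\partial_t}_{\Sigma_T}(\psi)=\frac12\int_{\Sigma_T}\left(r^2(\partial_t\psi)^2+r^2(\partial_{r_*}\psi)^2+F|\nabla_{S^2}\psi|^2+\frac{fF}{r}\psi^2\right)\d r_*\,\d^2\omega.
\end{equation*}

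Next I would substitute $\hat\psi=r\psi$ into the rescaled flux. Since $r$ is independent of $t$ and constant on each sphere, the time and angular contributions rescale cleanly: $(\partial_t\hat\psi)^2=r^2(\partial_t\psi)^2$ and $R^2F|\nabla_{S^2}\hat\psi|^2=F|\nabla_{S^2}\psi|^2$, while the potential term becomes $(2M+f)FR^3\hat\psi^2=(2M+f)\frac{F}{r}\psi^2$. The only contribution that does not transform diagonally is the radial one: from $\partial_{r_*}r=F$ one has $\partial_{r_*}\hat\psi=r\,\partial_{r_*}\psi+F\psi$, so $(\partial_{r_*}\hat\psi)^2=r^2(\partial_{r_*}\psi)^2+2rF\psi\,\partial_{r_*}\psi+F^2\psi^2$, producing a cross term and an extra zeroth-order term.

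The crux is to dispose of the cross term. I would write $2rF\psi\,\partial_{r_*}\psi=rF\,\partial_{r_*}(\psi^2)$ and integrate by parts in $r_*$; the boundary terms vanish because the data is compactly supported. Since $rF=r-2M$ gives $\partial_{r_*}(rF)=\partial_{r_*}r=F$, the cross term contributes $-\int F\psi^2$, which together with the $F^2\psi^2$ term yields $(F^2-F)\psi^2=-\frac{2MF}{r}\psi^2$, using $F-1=-2M/r$. This is precisely where I expect the main obstacle to lie: the rescaled potential carries the extra summand $2MR\hat\psi^2$ coming from the conformal transformation of the wave operator (the scalar-curvature contribution), and the bookkeeping works only because the anomalous $-\frac{2MF}{r}\psi^2$ generated by the radial integration by parts cancels this $2M$, leaving $(2M+f)\frac{F}{r}\psi^2-\frac{2MF}{r}\psi^2=\frac{fF}{r}\psi^2$. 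Collecting the surviving terms then reproduces the displayed expression for $\mathcal E^{\partial_t}_{\Sigma_T}(\psi)$, which establishes the equality.
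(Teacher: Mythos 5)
Your proof is correct and takes essentially the same route as the paper: both compute the two fluxes explicitly in $(t,r_*,\omega)$ coordinates and match them through the substitution $\hat\psi=r\psi$, with the cross term from the radial derivative removed by an integration by parts in $r_*$ whose output ($F^2-F=-2MF/r$ per unit $\psi^2$) exactly cancels the $2MR$ piece of the rescaled potential. The only cosmetic differences are the direction of the substitution (the paper rewrites $\mathcal{E}^{\partial_t}_{\Sigma_T}(\psi)$ in terms of $\hat\psi$, you rewrite $\hat{\mathcal{E}}^{\partial_t}_{\Sigma_T}(\hat\psi)$ in terms of $\psi$) and your use of a separate correction current $W$ where the paper absorbs the potential term into $T_{ab}$ itself.
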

\begin{proof}
We consider the stress-energy tensor for Equation \eqref{nonlinearequation} which is given by  
\begin{equation}\label{stressE}
T_{ab} = \nabla_a\psi \nabla_b\psi - \frac{1}{2}\left< \nabla \psi, \nabla\psi \right>_g g_{ab} + \frac{1}{2}\frac{f(r,M)}{r^3}\psi^2g_{ab},
\end{equation}
and which satisfies the conservation law $\nabla^aT_{ab} = 0$.
The energy flux associated $T=\partial_t$ through $\Sigma_0$ is
$$\mathcal{E}_{\Sigma_T}^{\partial_t}(\psi) = -4 \int_{\Sigma_0}*J_a \d x^a = \int_{\Sigma_T} J_aN^a L\hook \mathrm{dVol}_g,$$
where $J_a = T^bT_{ab}$ and $L=\dfrac{1}{F}\partial_t$ is transverse vector to $\Sigma_T$ and $N=\partial_t$ is a normal normal vector field to $\Sigma_T$ such that $L^aN_a=1$. By a straightforward calculation we get
\begin{eqnarray*}
\mathcal{E}^{\partial_t}_{\Sigma_T}(\psi) &=& \frac{1}{2} \int_{\Sigma_T} \left( (\hat{\psi}_t)^2 + (\hat{\psi}_{r_*})^2 + \frac{F}{r^2}|\nabla_{S^2}\hat\psi|^2  + \frac{FF'}{r}\hat{\psi}^2 + \frac{Ff(r,M)}{r^3}\hat{\psi}^2\right) \d r_*\d^2\omega \\
&=& \frac{1}{2} \int_{\Sigma_T} \left( (\hat{\psi}_t)^2 + (\hat{\psi}_{r_*})^2 + \frac{F}{r^2}|\nabla_{S^2}\hat\psi|^2  + (2M+f(r,M))FR^3\hat{\psi}^2 \right) \d r_*\d^2\omega\\
&=& \hat{\mathcal E}^{\partial_t}_{\Sigma_T}(\hat\psi).
\end{eqnarray*}
\end{proof}
\begin{remark}
\item[$\bullet$] The energy of the original fields across $\Sigma_T$ is defined by the stress-tensor $T_{ab}$ which is exactly the one defined in the work of Andersson et al. \cite{ABlu}. 
\item[$\bullet$] The equality $\hat{\mathcal E}^{\partial_t}_{\Sigma_T}(\hat{\psi}) = \mathcal{E}^{\partial_t}_{\Sigma_T}(\psi)$ does not mean that the energy is conformally invariant on the Cauchy hypersurface $\Sigma_T$. This is energy current $1$-form of the rescaled fields is obtained by contracting $\hat{J}^a+V^a$ with $\hat{T}_{ab}$ and the one of the original fields is obtained by contracting $T^a$ with $T_{ab}$.
\end{remark}
The finite energy space is defined as follows.
\begin{definition}
We denote by $\mathcal{H}$ the completion of $\mathcal{C}_0^\infty (\Sigma_0)\times \mathcal{C}_0^\infty (\Sigma_0)$ in the norm
$$\left\|(\hat{\psi}_0,\hat{\psi}_1) \right\|_{\mathcal H} = \frac{1}{\sqrt 2}\left(\int_{\Sigma_0} \left( (\hat{\psi}_t)^2 + (\hat{\psi}_{r_*})^2 + \frac{F}{r^2}|\nabla_{S^2}\hat\psi|^2  + (2M+f(r,M))FR^3\hat{\psi}^2 \right) \d r_*\d^2\omega \right)^{1/2}.$$
\end{definition}
Before stating the theorem of the well-posedness of Cauchy problem, we need the following definition of Sobolev spaces defined on open sets (see Definition 2 in \cite{Ni2016}).
\begin{definition}
Let $s\in [0,+\infty[$. A scalar function $u$ defined on $\mathcal{B}_I$ is said to belong to $H^s_{loc}(\bar{\mathcal{B}_I})$ if for any
local chart $(\Omega, \zeta)$, such that $\Omega \subset \mathcal{B}_I$ is an open set with smooth compact boundary in $\bar{\mathcal{B}_I}$ (note that this excludes neighbourhoods of either $i^\pm$ or $i^0$ but allows open sets whose boundary contains parts of the conformal boundary) and $\zeta$ is a smooth diffeomorphism from $\Omega$ onto a bounded open set $U \subset \mathbb{R}^4$ with smooth compact boundary, we have $u \circ \zeta^{-1} \in H^s(U)$.
\end{definition}
By Leray's theorem for the hyperbolic differential equations and the energy identity between $\Sigma_0$ and $\Sigma_t$, we have the following classic result.
\begin{proposition}
The Cauchy problem for Equation \eqref{nonlinearequation} on $\mathcal{B}_I$ (hence for the rescaled Equation\eqref{rescaledequation} on $\bar{\mathcal{B}_I}$) is well-posed in $\mathcal{H}$. This means that for any $(\hat{\psi}_0,\hat{\psi}_1) \in \mathcal{H}$, there exists a unique solution $\psi \in \mathcal{D}'(\mathcal{B}_I)$ of \eqref{nonlinearequation} such that
$$(r\psi,r\partial_t\psi) \in \mathcal{C}(\mathbb{R}_t;\mathcal{H}): \, r\psi|_{t=0}=\hat{\psi}_0; \, r\partial_t\psi|_{t=0} = \hat{\psi}_1.$$
Moreover, $\hat{\psi}=r\psi$ belongs to $H^1_{loc}(\bar{\mathcal B}_I)$.
\end{proposition}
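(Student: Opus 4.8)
The plan is to reduce the statement to three standard ingredients applied to the pair of equations \eqref{nonlinearequation}--\eqref{rescaledequation}: Leray's global existence theorem for linear hyperbolic equations, the exact energy conservation generated by the Killing field $\partial_t$, and a density argument in the Hilbert space $\mathcal{H}$; the regularity up to the conformal boundary will then be extracted from the rescaled equation. First I would treat smooth compactly supported Cauchy data. Equation \eqref{nonlinearequation} is a second-order linear wave-type equation ($\Box_g$ plus a smooth, bounded, zeroth-order term) on the globally hyperbolic exterior $\mathcal{B}_I$, for which $\Sigma_0=\{t=0\}$ is a Cauchy hypersurface. Leray's theorem then yields, for any $(\psi_0,\psi_1)\in\mathcal{C}_0^\infty(\Sigma_0)\times\mathcal{C}_0^\infty(\Sigma_0)$, a unique smooth solution $\psi$ with spatially compact support on each slice $\Sigma_t$ by finite propagation speed; setting $\hat\psi=r\psi$ gives the corresponding solution of the rescaled equation \eqref{rescaledequation} on $\bar{\mathcal{B}_I}$.

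Next I would establish the a priori energy estimate. Contracting the conserved stress-energy tensor \eqref{stressE} with $\partial_t$ produces the current $J_a=T_{ab}(\partial_t)^b$; since $\nabla^aT_{ab}=0$ and $\partial_t$ is Killing, $T_{ab}\nabla^a(\partial_t)^b=T_{ab}\nabla^{(a}(\partial_t)^{b)}=0$, so $J$ is divergence-free. Applying the Stokes formula \eqref{Stokesformula} between $\Sigma_0$ and $\Sigma_t$ (there is no lateral flux, by compact support) gives the exact conservation $\mathcal{E}^{\partial_t}_{\Sigma_t}(\psi)=\mathcal{E}^{\partial_t}_{\Sigma_0}(\psi)$, which by Lemma \ref{EnergyInvariant} equals $\|(r\psi,r\partial_t\psi)|_{\Sigma_t}\|_{\mathcal{H}}^2$. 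Here I would invoke the positivity of this energy, guaranteed for $l\geq 2$ by the spherical-harmonic bound $\int_{S^2}(|\nabla_{S^2}\hat\psi|^2+f(r,M)R\hat\psi^2)\,\d^2\omega\geq 0$ recalled above, so that $\|\cdot\|_{\mathcal{H}}$ is a genuine norm and $t\mapsto(r\psi,r\partial_t\psi)$ is a continuous curve in $\mathcal{H}$.

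The extension to arbitrary data and uniqueness then follow by density. Given $(\hat\psi_0,\hat\psi_1)\in\mathcal{H}$, choose smooth compactly supported data converging to it in $\mathcal{H}$; by linearity and the conservation identity the associated solutions form a Cauchy sequence in $\mathcal{C}(\mathbb{R}_t;\mathcal{H})$, whose limit is the required solution, matching the prescribed data at $t=0$. Uniqueness in $\mathcal{D}'(\mathcal{B}_I)$ is immediate, since zero data forces zero energy, hence the zero solution.

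Finally, for the claim $\hat\psi\in H^1_{loc}(\bar{\mathcal{B}_I})$ I would use that the rescaled equation \eqref{rescaledequation} has smooth bounded coefficients all the way up to the null conformal boundary $\scri^\pm\cup\mathfrak{H}^\pm$ (the potential $(2M+f(r,M))R$ extends smoothly to $R=0$ and $R=1/2M$). On any local chart $(\Omega,\zeta)$ with smooth compact boundary avoiding $i^\pm,i^0$, local energy estimates for the hyperbolic operator $\Box_{\hat g}$, together with the characteristic flux bounds furnished by the energy identity \eqref{energyidentity} and Proposition~1, control $\hat\psi$ in $H^1(\zeta(\Omega))$. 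This last step is the main obstacle: the conserved energy controls $\hat\psi$ in $H^1$ only on the spacelike interior slices, whereas reaching the null boundary requires converting that control into transverse and tangential $H^1$ bounds up to $\scri^\pm$ and $\mathfrak{H}^\pm$. This is precisely where the regularity of the rescaled equation and the finiteness of the characteristic fluxes are indispensable, and where I would follow the scheme of Nicolas \cite{Ni2016} and \cite{Ni1995}.
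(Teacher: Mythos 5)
Your proposal is correct and follows essentially the same route the paper takes, namely Leray's theorem for the linear hyperbolic equation combined with the $\partial_t$-energy identity between $\Sigma_0$ and $\Sigma_t$ (as in \cite{Ni1995}) and a density argument, with the $H^1_{loc}(\bar{\mathcal{B}_I})$ regularity coming from the regularity of the rescaled equation up to the conformal boundary as in \cite{Ni2016}. The paper states this as a classic result without writing out the details, so your writeup is simply a faithful expansion of the intended argument.
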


\subsection{Energy identity up to $i^+$ and trace operator}
In this section, we will show that $\lim_{T\rightarrow \infty}\hat{\mathcal{E}}^{\partial_t}_{\mathcal{H}_T}(\hat\psi) = 0$ in \eqref{limitenergy} and obtain the relation between the energies  $\Sigma_0$, $\mathfrak{H}^+$ and $\scri^+$ by the following relation
$$\hat{\mathcal{E}}^{\partial_t}_{\Sigma_0}(\hat\psi) = \hat{\mathcal{E}}^{\partial_t}_{\mathfrak{H}^+}(\hat\psi) + \hat{\mathcal{E}}^{\partial_t}_{\scri^+}(\hat\psi).$$
We shall use the following results about the energy decay and pointwise decay of the solutions of the Regge-Wheeler and Zerilli Equations \eqref{nonlinearequation} which are obtained recently by Anderson, Blue and Wang (see Theorems 4.4 and 4.5 in \cite{ABlu})
\begin{theorem}i) (Energy Decay). Let $r_{FH}>3M$ be sufficiently large, and let $\psi$ be a solution to the Equations \eqref{nonlinearequation}, with the initial data imposed on $\mathcal{H}_{0}$ satisfying
$$\int_{\mathcal{N}_0}\sum_{k \leq 1}|T^k\partial_v(r\psi)|^2r^2\d v\d^2\omega + \int_{\mathcal{H}_0}\sum_{k \leq 2} J_{a}^K(T^k\psi)n^a_{\mathcal{H}_0}\d\mu_{\mathcal{H}_0} < \infty.$$
Then there exists a constant $I_1$ depending on the above initial data, such that
\begin{equation}\label{decay}
\mathcal{E}^K_{\mathcal{H}_T}(\psi)=\int_{\mathcal{H}_T} J_{a}^K(\psi)n^a_{\mathcal{H}_T}\d\mu_{\mathcal{H}_T} \leq \frac{I_1}{T^2}. 
\end{equation} 
ii) (Pointwise Decay). Denote $\Omega^l = \Omega_i^{i_1}\Omega_2^{i_2}\Omega_3^{i_3},\, i_1+i_2+i_3\leq l$, with $\Omega_i$ be three rotation Killing vector fields. Let $\psi$ be a solution to Equations \eqref{nonlinearequation}, with the initial data imposed on $\mathcal{H}_{0}$ satisfying
$$\sum_{l\leq 2}\int_{\mathcal{N}_0}\sum_{k \leq 1}|T^k\Omega^l\partial_v(r\psi)|^2r^2 \d v\d^2\omega + \int_{\mathcal{H}_0}\sum_{k \leq 2} J_{a}^K (T^k\Omega^l\psi)n^a_{\mathcal{H}_0}\d\mu_{\mathcal{H}_0} < \infty.$$
Then there exists a constant $I_2$ depending on the above data, such that in the future development of the initial hypersurface $\mathcal{J}^+(\mathcal{H}_0)$ we have
\begin{equation}\label{pointwisedecay}
r^{1/2}|\psi(T,r)| \leq \frac{I_2}{T}.
\end{equation}
Here $J_{a}^K(\psi)= K^bT_{ab}(\psi)$ is the contraction of the stress-energy tensor $T_{ab}$ with the global time-like vector field $K$ (given by \eqref{timelike}), $n^a_S$ is the future normal vector to the hypersurface $S$ and $\d\mu_{S}$ is the volume form of $S$. 
\end{theorem}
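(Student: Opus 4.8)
The plan is to establish both statements by the vector field and commutator method in the style of Dafermos--Rodnianski, adapted to the Regge--Wheeler and Zerilli potentials; since the theorem is quoted from \cite{ABlu}, I only sketch the structure one would follow. The three analytic ingredients are: (a) uniform boundedness of the $K$-energy $\mathcal{E}^K_{\mathcal{H}_\tau}(\psi)$ along the foliation $\{\mathcal{H}_\tau\}$, (b) an integrated local energy decay (Morawetz) estimate, and (c) an $r^p$-weighted energy hierarchy that upgrades boundedness to the quantitative rate $T^{-2}$. The pointwise bound \eqref{pointwisedecay} then follows by commuting \eqref{nonlinearequation} with the angular Killing fields $\Omega_i$ and applying Sobolev embedding on the spheres.

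First I would prove the energy boundedness. Contracting the conserved tensor $T_{ab}$ of \eqref{stressE} with the timelike field $K$ of \eqref{timelike} and integrating over the region bounded by $\mathcal{H}_0$ and $\mathcal{H}_\tau$ gives an identity whose boundary terms are the fluxes $\mathcal{E}^K_{\mathcal{H}_\tau}$ and whose bulk term is $\int \nabla^a J^K_a$. Because $K$ is transverse to $\mathfrak{H}^+$ and agrees with $\partial_t$ for large $r$, the deformation tensor of $K$ is supported in $r_+ < r < r_{NH}$, where the red-shift effect produces a term of favourable sign; the only remaining contribution is the zeroth-order piece from the potential, controlled by the coercivity
$$\int_{S^2}\left( |\nabla_{S^2}\hat{\psi}|^2 + f(r,M)R\,\hat{\psi}^2 \right) \d^2\omega \geq 0$$
valid for $l \geq 2$ (already recorded in the excerpt). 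This yields $\mathcal{E}^K_{\mathcal{H}_\tau}(\psi) \lesssim \mathcal{E}^K_{\mathcal{H}_0}(\psi)$.

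Next comes the integrated local energy decay. I would use a Morawetz multiplier of the form $a(r_*)\partial_{r_*} + b(r_*)$, choosing $a$ increasing and vanishing at the photon sphere $r = 3M$ so that the resulting spacetime integral is coercive in all derivatives away from $r=3M$ and degenerate only at the trapped set. The positivity of the Regge--Wheeler/Zerilli potentials for $l\geq 2$ again fixes the sign of the zeroth-order contribution. Combining this with $r^p$-weighted estimates, obtained by applying the multiplier $r^p\partial_v\hat{\psi}$ to the rescaled field $\hat{\psi}=r\psi$ in the asymptotic region near $\scri^+$ — which for $p\in(0,2]$ produces the hierarchy controlling $\int_{\mathcal{N}_\tau} r^p(\partial_v\hat{\psi})^2$ by its value at $\tau=0$ plus lower-order bulk terms — and running the standard pigeonhole argument down the chain $p=2\to p=1\to p=0$, one recovers the quantitative decay $\mathcal{E}^K_{\mathcal{H}_T}(\psi)\leq I_1 T^{-2}$, which is \eqref{decay}.

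For the pointwise statement I would commute \eqref{nonlinearequation} with the rotation Killing fields $\Omega_i$, which are tangent to the spheres and hence preserve the structure of the equation, to control $\Omega^l\psi$ for $l\leq 2$ in energy. Sobolev embedding $H^2(S^2)\hookrightarrow L^\infty(S^2)$ then bounds $\|\hat{\psi}(T,r,\cdot)\|_{L^\infty}^2$ by a sum of angular-commuted energy densities, and a one-dimensional Sobolev estimate in $r_*$ (integrating $\partial_{r_*}(\hat{\psi}^2)$ inward from $\scri^+$) converts the decayed flux into a pointwise bound; tracking the $r$-weights gives $r^{1/2}|\psi| = r^{-1/2}|\hat{\psi}| \leq I_2 T^{-1}$, which is \eqref{pointwisedecay}. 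The main obstacle throughout is the trapping at $r=3M$: the Morawetz estimate necessarily degenerates there, so the $T^{-2}$ rate can only be reached by spending the extra $T$-derivatives built into the initial-data norms (the $k\leq 2$ sums), and checking that the Regge--Wheeler and Zerilli potentials do not spoil the sign conditions at the photon sphere is the delicate point that the structural assumption $l\geq 2$ is there to guarantee.
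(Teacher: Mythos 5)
This theorem is not proved in the paper at all: it is quoted verbatim from Theorems 4.4 and 4.5 of the reference \cite{ABlu} (Andersson--Blue--Wang), so there is no in-paper argument to compare your proposal against. Your outline does correctly reproduce the strategy of that source --- uniform boundedness of the $K$-energy via the redshift effect near the horizon, a Morawetz/integrated local energy decay estimate degenerating at the photon sphere $r=3M$, the Dafermos--Rodnianski $r^p$-weighted hierarchy with the pigeonhole argument to get the $T^{-2}$ rate, and angular commutation plus Sobolev embedding for the pointwise bound --- and you correctly identify the role of the $l\geq 2$ coercivity of the Regge--Wheeler and Zerilli potentials and of the extra $T$-commutations in the initial-data norms. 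Be aware, though, that what you have written is a plan rather than a proof: the genuinely hard steps (the sign analysis of the Zerilli potential under the Morawetz multiplier, the loss at trapping, and the precise $r$-weights in passing from the decayed flux to $r^{1/2}|\psi|\lesssim T^{-1}$) are named but not executed, which is acceptable here only because the statement is an imported black-box result rather than something the paper undertakes to establish.
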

Now we state and prove the main theorem of this section.
\begin{theorem}\label{EQ}
let $\psi$ be a solution to Equations \eqref{nonlinearequation}, with the initial data imposed on $\mathcal{H}_{0}$ satisfying
$$\sum_{l\leq 2}\int_{\mathcal{N}_0}\sum_{k \leq 1}|T^k\Omega^l\partial_v(r\psi)|^2r^2 \d v\d^2\omega + \int_{\mathcal{H}_0}\sum_{k \leq 2} J_{a}^K (T^k\Omega^l\psi)n^a_{\mathcal{H}_0}\d\mu_{\mathcal{H}_0} < \infty.$$
The energy of the rescaled fields $\hat{\psi}$ across the hypersurface $\mathcal{H}_T$ tends to zero as $T$ tends to infinity
$$\lim_{T\rightarrow \infty}\hat{\mathcal{E}}^{\partial_t}_{\mathcal{H}_T}(\hat\psi) = 0.$$
As a consequence, the equality of the energies holds true
\begin{equation}\label{energyequality}
\hat{\mathcal{E}}^{\partial_t}_{\Sigma_0}(\hat\psi) = \hat{\mathcal{E}}^{\partial_t}_{\mathfrak{H}^+}(\hat\psi) + \hat{\mathcal{E}}^{\partial_t}_{\scri^+}(\hat\psi).
\end{equation}
\end{theorem}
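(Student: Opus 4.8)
The plan is to reduce everything to the single limit $\lim_{T\to\infty}\hat{\mathcal{E}}^{\partial_t}_{\mathcal{H}_T}(\hat\psi)=0$, since the energy equality \eqref{energyequality} is then immediate from the identity \eqref{limitenergy} of the previous proposition. To estimate $\hat{\mathcal{E}}^{\partial_t}_{\mathcal{H}_T}(\hat\psi)$ I would start from its explicit expression \eqref{resenergy} as a sum of three fluxes, over the ingoing null piece $\Sigma_T^i$, the outgoing null piece $\mathcal{N}_T$, and the compact timelike piece $\Sigma_T^e$, and bound each of them by the physical $K$-energy $\mathcal{E}^K_{\mathcal{H}_T}(\psi)$, which decays like $I_1/T^2$ by \eqref{decay}, up to zeroth-order remainders that I will control through the pointwise decay \eqref{pointwisedecay}. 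Since $\hat{\mathcal{E}}^{\partial_t}_{\mathcal{H}_T}(\hat\psi)$ is already known to be nonnegative with existing limit, it suffices to produce an upper bound that tends to zero.

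The comparison rests on two features of the global vector field $K$ defined in \eqref{timelike}: it equals $\partial_t$ for $r\geq r_{NH}$, and it is uniformly timelike and transverse to the horizon for $2M<r<r_{NH}$. On $\Sigma_T^e$ the slice is $\{t=T\}$ with $r$ in the compact interval $[r_{NH},r_{FH}]$; there $K=\partial_t$, so by the local version of Lemma \ref{EnergyInvariant} the rescaled flux $\hat{\mathcal{E}}^{\partial_t}_{\Sigma_T^e}(\hat\psi)$ equals the physical flux $\mathcal{E}^{\partial_t}_{\Sigma_T^e}(\psi)=\mathcal{E}^K_{\Sigma_T^e}(\psi)$. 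On $\mathcal{N}_T$ we again have $K=\partial_t$, so the outgoing null density matches the corresponding component of $J^K_a n^a$. On $\Sigma_T^i$ the vector $\partial_t$ becomes null at the horizon, so the $\partial_t$-flux there is degenerate; since $K$ is coercive on this piece and $r$ stays in the bounded range $(2M,r_{NH})$, the degenerate flux $\hat{\mathcal{E}}^{\partial_t}_{\Sigma_T^i}(\hat\psi)$ is bounded by a constant times $\mathcal{E}^K_{\Sigma_T^i}(\psi)$.

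The surviving terms are of lower order and are where the pointwise decay enters. Since the physical $K$-flux of \cite{ABlu} is itself written through $\partial_v(r\psi)=\partial_v\hat\psi$ with an $r^2$ weight, the principal null-derivative density $(\partial_v\hat\psi)^2$ on $\mathcal{N}_T$ is dominated directly by the corresponding $K$-density, with room to spare as $r>r_{FH}>1$, and likewise $R^2F|\nabla_{S^2}\hat\psi|^2=F|\nabla_{S^2}\psi|^2$ is matched by the angular part of $J^K_a n^a$; no splitting of $\partial_v\hat\psi$ is needed for these. What remains is the potential density, which under $\hat\psi=r\psi$ becomes $(2M+f)FR^3\hat\psi^2=(2M+f)FR\psi^2$ and contains the rescaling-generated piece $2MFR\psi^2$ absent from the physical flux. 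I would bound this $\psi^2$-term by the pointwise estimate \eqref{pointwisedecay}, $\psi^2\leq I_2^2/(rT^2)$, which both restores integrability in $v$ on the non-compact leg $\mathcal{N}_T$ through the extra $r^{-1}$ and yields an explicit $T^{-2}$ decay after integration; the positivity $\int_{S^2}(|\nabla_{S^2}\hat\psi|^2+fR\hat\psi^2)\,\d^2\omega\geq 0$ already recorded for $\Sigma_0$ guarantees that the angular-plus-potential density carries no negative part that could spoil these bounds. Collecting the three pieces gives $\hat{\mathcal{E}}^{\partial_t}_{\mathcal{H}_T}(\hat\psi)\leq C\,\mathcal{E}^K_{\mathcal{H}_T}(\psi)+o(1)\leq CI_1/T^2+o(1)$, whence the limit is zero and \eqref{energyequality} follows.

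The main obstacle is the outgoing piece $\mathcal{N}_T$. There the conformal factor $r$ is unbounded, so the passage from $\psi$ to $\hat\psi=r\psi$ and the comparison of the rescaled measure $\d v\,\d^2\omega$ with the physical flux measure $\d\mu_{\mathcal{N}_T}$ must be tracked precisely, and one has to verify that the $r^{-1}$ gain from \eqref{pointwisedecay} is exactly what is needed both to render the rescaling-induced $\psi^2$-terms finite on the non-compact leg and to send them to zero uniformly as $T\to\infty$. Establishing this clean domination of the rescaled $\partial_t$-flux by the physical $K$-flux on the unbounded null leg, rather than on the near-horizon and compact legs, which are comparatively routine, is where the argument has to be carried out with care.
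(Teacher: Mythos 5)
Your overall strategy is the same as the paper's: decompose $\hat{\mathcal{E}}^{\partial_t}_{\mathcal{H}_T}(\hat\psi)$ according to \eqref{resenergy}, dominate it by the physical $K$-flux $\mathcal{E}^K_{\mathcal{H}_T}(\psi)$ up to zeroth-order remainders, kill the main term with \eqref{decay} and the remainders with \eqref{pointwisedecay} integrated along the null legs (where $t=(u+v)/2$ makes the $t^{-2}$ factor integrable in the null variable), and then read off \eqref{energyequality} from \eqref{limitenergy}. The treatment of $\Sigma_T^e$ via Lemma \ref{EnergyInvariant} and of $\Sigma_T^i$ via the coercivity $1+y_1F^{-1}\geq 1$ of $K$ also matches the paper.

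There is, however, one step that fails as stated. On $\mathcal{N}_T$ you claim that $(\partial_v\hat\psi)^2$ is ``dominated directly by the corresponding $K$-density'' because the $K$-flux of \cite{ABlu} is ``written through $\partial_v(r\psi)$ with an $r^2$ weight,'' so that ``no splitting of $\partial_v\hat\psi$ is needed.'' This conflates the initial-data norm on $\mathcal{N}_0$ appearing in the hypotheses (which is indeed $\int r^2|\partial_v(r\psi)|^2$) with the flux $J^K_a n^a\,\d\mu$ across $\mathcal{N}_T$, whose principal term is $r^2|\partial_v\psi|^2$, not $r^2|\partial_v\hat\psi|^2$. Since $r\,\partial_v\psi=\partial_v\hat\psi-(\partial_v r)\,\psi$, the quantity $r^2|\partial_v\psi|^2$ does not pointwise dominate $|\partial_v\hat\psi|^2$: the splitting you decline to perform is unavoidable, and it produces an extra $F^2\psi^2$ remainder on $\mathcal{N}_T$ (and the analogous $\partial_u$ remainder on $\Sigma_T^i$). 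The gap is localized and fixable with tools you already invoke: these $F^2\psi^2$ terms are of exactly the same type as the $2MR\psi^2$ potential remainder you control via \eqref{pointwisedecay}, and the paper's proof handles them together, arriving at $\hat{\mathcal{E}}^{\partial_t}_{\mathcal{H}_T}(\hat\psi)\leq\mathcal{E}^K_{\mathcal{H}_T}(\psi)+\int_{\Sigma_T^i\cup\mathcal{N}_T}(F^2+2MR)|\psi|^2$ before applying the two decay estimates. A minor further quibble: after integrating $(u+v)^{-2}$ over the unbounded null leg the remainder decays like $T^{-1}$, not $T^{-2}$; this is still sufficient for the limit to vanish.
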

\begin{proof}
Since $K=\partial_t$ on the domain $r\geq r_{NH}$ and from the energy property in Lemma \ref{EnergyInvariant}, we obtain that
$$\hat{\mathcal{E}}^K_{\Sigma_T^e}(\hat\psi) = \hat{\mathcal{E}}^{\partial_t}_{\Sigma_T^e}(\hat\psi)=\mathcal{E}^{\partial_t}_{\Sigma_T^e}(\psi) = \mathcal{E}^K_{\Sigma_T^e}(\psi).$$
We recall that $J_a = (\partial_t)^bT_{ab} = \frac{1}{2}(\partial_u + \partial_v)^b T_{ab}$ with $T_{ab}$ is given by \eqref{stressE} i.e, 
$$T_{ab} = \nabla_a\psi \nabla_b\psi - \frac{1}{2}\left< \nabla \psi, \nabla\psi \right>_g g_{ab} + \frac{1}{2}\frac{f(r,M)}{r^3}\psi^2g_{ab},$$ 
$g_{ab} = F (\d u^2 - \d v^2) - r^2\d \omega^2$ and $\mathrm{dVol}_g = \frac{r^2F}{2}\d u \wedge \d v \wedge \d^2\omega$.
The energy flux across $\mathcal{N}_T$ can be calculated by taking 
$$L|_{\mathcal{N}_T} = \frac{F^{-1}}{2}n|_{\mathcal{N}_T} \hbox{  and  } N|_{\mathcal{N}_T} = \partial_v.$$
We get
\begin{align*}
\mathcal{E}^{K}_{\mathcal{N}_T}(\psi)&= \mathcal{E}^{\partial_t}_{\mathcal{N}_T}(\psi)= \int_{\mathcal{N}_T}(J_aN^a)L|_{\mathcal{N}_T}\hook \mathrm{dVol}_g \\
&=\int_{\mathcal{N}_T} \left( |\partial_v (\psi|_{\mathcal{N}_T})|^2 + F\left(\frac{1}{r^2} |\nabla_{S^2}(\psi|_{\mathcal{N}_T})|^2 + \frac{f(r,M)}{r^3}|\psi|_{\mathcal{N}_T}|^2 \right) \right) r^2\d v \d^2\omega.
\end{align*}
Now we have
$$J^K_a = K^bT_{ab} = \left[ (1+y_1(r)F^{-1})\partial_u + (1+y_2(r))\partial_v \right]^b T_{ab},$$
where $y_1, \, y_2>0$ are supported near the event horizon (where $r_+<r<r_{NH}$) and $y_1=1, \, y_2=0$ at the event horizon.
The energy flux across $\Sigma_T^i$ can be calculated in variables $u,\, r_*,\, \omega$ by taking
$$L|_{\Sigma_T^i} = \frac{F^{-1}}{2}l|_{\Sigma_T^i} \hbox{  and  } N|_{\Sigma_T^i} = \partial_u.$$
More precisely, we have
\begin{align*}
\mathcal{E}^K_{\Sigma_T^i}(\psi) &= \int_{\Sigma_T^i}(J^K_aN^a)L|_{\Sigma_T^i}\hook \mathrm{dVol}_g \\
&= \int_{\Sigma_T^i} \left( (1+y_1(r)F^{-1})|\partial_u \psi|_{\Sigma_T^i}|^2 + F \left( \frac{1}{r^2}|\nabla_{S^2}\psi|_{\Sigma_T^i}|^2 + \frac{f(r,M)}{r^3}|\psi|_{\Sigma_T^i}|^2 \right) \right)r^2 \d u \d^2\omega\\
&\geq \int_{\Sigma_T^i} \left( |\partial_u \psi|_{\Sigma_T^i}|^2 + F \left( \frac{1}{r^2}|\nabla_{S^2}\psi|_{\Sigma_T^i}|^2 + \frac{f(r,M)}{r^3}|\psi|_{\Sigma_T^i}|^2 \right) \right)r^2 \d u \d^2\omega.
\end{align*}
Therefore we obtain that
\begin{eqnarray*}
{\mathcal{E}}^K_{\mathcal{H}_T}(\psi) &=& {\mathcal{E}}^K_{\Sigma_T^i}(\psi) + {\mathcal{E}}^K_{\Sigma_T^e}(\psi)+ {\mathcal{E}}^K_{\mathcal{N}_T}(\psi)\\
&\geq& \int_{\Sigma_T^i} \left( |\partial_u \psi|_{\Sigma_T^i}|^2 + F \left( \frac{1}{r^2}|\nabla_{S^2}\psi|_{\Sigma_T^i}|^2 + \frac{f(r,M)}{r^3}|\psi|_{\Sigma_T^i}|^2 \right) \right)r^2 \d u \d^2\omega + {\mathcal E}^{K}_{\Sigma_T^e}(\psi)  \nonumber\\
&&+ \int_{\mathcal{N}_T} \left( |\partial_v (\psi|_{\mathcal{N}_T})|^2 + F\left(\frac{1}{r^2} |\nabla_{S^2}(\psi|_{\mathcal{N}_T})|^2 + \frac{f(r,M)}{r^3}|\psi|_{\mathcal{N}_T}|^2 \right) \right) r^2\d v \d^2\omega \nonumber\\
&\geq&\int_{\Sigma_T^i} \left( |\partial_u (\hat{\psi}|_{\Sigma_T^i}) + F\psi|_{\Sigma_T^i}|^2 + F\left(R^2 |\nabla_{S^2}(\hat{\psi}|_{\Sigma_T^i})|^2 + f(r,M)R^3|\hat\psi|_{\Sigma_T^i}|^2 \right) \right)\d u \d^2\omega \\
&&+ \hat{\mathcal E}^{\partial_t}_{\Sigma_T^e}(\psi) \nonumber\\
&&+ \int_{\mathcal{N}_T} \left( |\partial_v (\hat\psi|_{\mathcal{N}_T}) - F\psi|_{\mathcal{N}_T}|^2 +  F\left(R^2 |\nabla_{S^2}(\hat\psi|_{\mathcal{N}_T})|^2 + f(r,M)R^3|\hat\psi|_{\mathcal{N}_T}|^2 \right)  \right)\d v \d^2\omega\\
&\geq&\int_{\Sigma_T^i} \left( |\partial_u (\hat{\psi}|_{\Sigma_T^i})|^2 - F^2|\psi|_{\Sigma_T^i}|^2 \right.\\
&&\hspace{3cm} \left. + F\left(R^2 |\nabla_{S^2}(\hat{\psi}|_{\Sigma_T^i})|^2 + (2M+f(r,M))R^3|\hat\psi|_{\Sigma_T^i}|^2 \right) \right)\d u \d^2\omega \\
&&+ \hat{\mathcal E}^{\partial_t}_{\Sigma_T^e}(\psi) - \int_{\Sigma_T^i} 2MR^3 |\hat{\psi}|_{\Sigma_T^i}|^2\d u \d^2\omega - \int_{\mathcal{N}_T} 2MR^3 |\hat{\psi}|_{\mathcal{N}_T}|^2\d v \d^2\omega \nonumber\\
&&+ \int_{\mathcal{N}_T} \left( |\partial_v (\hat\psi|_{\mathcal{N}_T})|^2 - F^2|\psi|_{\mathcal{N}_T}|^2 \right.\\
&&\hspace{3cm}\left. +  F\left(R^2 |\nabla_{S^2}(\hat\psi|_{\mathcal{N}_T})|^2 + (2M+f(r,M))R^3|\hat\psi|_{\mathcal{N}_T}|^2 \right)  \right)\d v \d^2\omega.
\end{eqnarray*}
%The above inequality holds due to the fact that for $l \geq 2$ \textcolor{red}{for $l \geq 2$, we have the following identity. }:
%$$\int_{S^2}\left( \frac{1}{r^2}|\nabla_{S^2}\psi|^2 + \frac{f(r,M)}{r^3}|\psi|^2 \right)r^2\d^2\omega= \int_{S^2(t,r)}\left( |\slashed\nabla_{S^2(t,r)}\psi|^2 + \frac{f(r,M)}{r^3}|\psi|^2 \right)\d^2 \omega_{S^2(t,r)} \geq 0.$$  
Together with\eqref{resenergy}, \eqref{decay} and \eqref{pointwisedecay}, the above equation yields
\begin{eqnarray*}
\hat{\mathcal{E}}^{\partial_t}_{\mathcal{H}_T}(\hat\psi) &\leq& {\mathcal{E}}^K_{\mathcal{H}_T}(\psi)+ \int_{\Sigma_T^i}(F^2+2MR) |\psi|^2 \d u \d^2\omega +  \int_{\mathcal{N}_T} (F^2+2MR)|\psi|^2 \d v \d^2\omega\\
&\leq& \frac{I_1}{T^2} +  \left(\int_{\Sigma_T^i}\frac{I^2_2}{rt^2} \d u \d^2\omega +  \int_{\mathcal{N}_T}\frac{I^2_2}{rt^2} \d v \d^2\omega \right)\\
&\leq& \frac{I_1}{T^2} +  (F^2+2)\left( \int_{T-(r_{NH})_*}^\infty\int_{S^2}\frac{I_2^2}{2Mt^2} \d u \d^2\omega + \int_{T+(r_{FH})_*}^\infty\int_{S^2}\frac{I_2^2}{r_{FH}t^2}\d v \d^2\omega \right)\\
&\leq& \frac{I_1}{T^2} +  (F^2+2) \left( \int_{T-(r_{NH})_*}^\infty\int_{S^2}\frac{4I_2^2}{2M(u+v)^2} \d u \d^2\omega + \int_{T+(r_{FH})_*}^\infty\int_{S^2}\frac{4I_2^2}{r_{FH}(u+v)^2} \d v \d^2\omega \right)\\
&\leq& \frac{I_1}{T^2} + (F^2+2)\left( \frac{2\pi I_2^2}{M}\frac{1}{T-(r_{NH})_* + v|_{\Sigma^i_T}} + \frac{4\pi I^2_2}{r_{FH}}\frac{1}{T+(r_{FH})_* + u|_{\mathcal{N}_T}} \right).
\end{eqnarray*}
Therefore, 
$$\lim_{T\rightarrow \infty}\hat{\mathcal{E}}^{\partial_t}_{\mathcal{H}_T}(\hat\psi) = 0.$$
Plugging into \eqref{limitenergy} we obtain the energy identity up to $i^+$, i.e, the relation between the energy across $\Sigma_0$ and the ones across $\mathfrak{H}^+\cup \scri^+$ as follows 
\begin{equation*}
\hat{\mathcal{E}}^{\partial_t}_{\Sigma_0}(\hat\psi) = \hat{\mathcal{E}}^{\partial_t}_{\mathfrak{H}^+}(\hat\psi) + \hat{\mathcal{E}}^{\partial_t}_{\scri^+}(\hat\psi).
\end{equation*}
\end{proof}
\begin{definition}(Trace operator). Let $(\hat{\psi}_0, \hat{\psi}_1)\in \mathcal{C}_0^\infty(\Sigma_0)\times \mathcal{C}_0^\infty(\Sigma_0)$. Consider the solution of Equations \eqref{rescaledequation}, $\hat{\psi} \in \mathcal{C}^\infty(\bar{\mathcal{B}_I})$ such that
$$\hat{\psi}|_{\Sigma_0} = \hat{\psi}_0, \, \partial_t \hat{\psi}_1|_{\Sigma_0} = \hat{\psi}_1.$$
We define the trace operator $\mathcal{T}^+$ from $\mathcal{C}_0^\infty(\Sigma_0)\times \mathcal{C}_0^\infty(\Sigma_0)$ to $\mathcal{C}^\infty(\mathfrak{H}^+)\times \mathcal{C}_0^\infty(\scri^+)$ as follows
$$\mathcal{T}^+(\hat{\psi}_0, \hat{\psi}_1) = (\hat{\psi}|_{\mathfrak{H}^+}, \hat{\psi}|_{\scri^+}).$$
\end{definition}
\begin{remark}
The energy identity up to $i^+$ \eqref{energyequality} shows that the trace operator $\mathcal{T}^+$ is injective.
\end{remark}
%We can extend the function space for scattering data by density as the following definition \textcolor{red}{???}.
\begin{definition}
The function space for scattering data $\mathcal{H}^+$ is the completion of $\mathcal{C}_0^\infty(\mathfrak{H}^+) \times \mathcal{C}_0^\infty(\scri^+)$ in the norm
$$\left\| (\xi,\zeta) \right\|_{\mathcal{H}^+} = \left(\int_{\mathfrak{H}^+}(\partial_v\xi)^2 \d v\d^2\omega + \int_{\scri^+}(\partial_u\zeta)^2 \d u \d^2\omega \right)^{1/2}.$$
This means that
$$\mathcal{H}^+ \simeq \dot{H}^1(\mathbb{R}_v; L^2(S^2_\omega)) \times \dot{H}^1(\mathbb{R}_u; L^2(S^2_\omega)).$$
\end{definition}
\begin{theorem}\label{Trace}
The trace operator extends uniquely as a bounded linear map from $\mathcal{H}$ to $\mathcal{H}^+$. The extended operator is a partial isometry i.e. for any $(\hat{\psi}_0,\hat{\psi}_1) \in \mathcal{H}$,
$$\left\| \mathcal{T}^+(\hat{\psi}_0,\hat{\psi}_1) \right\|_{\mathcal{H}^+} = \left\| (\hat{\psi}_0,\hat{\psi}_1) \right\|_{\mathcal{H}}.$$
\end{theorem}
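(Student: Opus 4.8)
The plan is to read the asserted isometry straight off the energy identity up to $i^+$ furnished by Theorem \ref{EQ}, and then to extend by density. First I would identify the two Hilbert space norms in the statement with the energy fluxes appearing in \eqref{energyequality}. Comparing the definition of $\|\cdot\|_{\mathcal{H}}$ with the explicit expression for $\hat{\mathcal{E}}^{\partial_t}_{\Sigma_0}(\hat\psi)$ recorded earlier, the normalising factor $1/\sqrt 2$ is chosen precisely so that $\|(\hat\psi_0,\hat\psi_1)\|_{\mathcal{H}}^2 = \hat{\mathcal{E}}^{\partial_t}_{\Sigma_0}(\hat\psi)$ for every $(\hat\psi_0,\hat\psi_1)\in\mathcal{C}_0^\infty(\Sigma_0)\times\mathcal{C}_0^\infty(\Sigma_0)$. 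Similarly, setting $\xi=\hat\psi|_{\mathfrak{H}^+}$ and $\zeta=\hat\psi|_{\scri^+}$, the definition of $\|\cdot\|_{\mathcal{H}^+}$ together with the boundary fluxes computed in the first proposition gives $\|\mathcal{T}^+(\hat\psi_0,\hat\psi_1)\|_{\mathcal{H}^+}^2 = \hat{\mathcal{E}}^{\partial_t}_{\mathfrak{H}^+}(\hat\psi)+\hat{\mathcal{E}}^{\partial_t}_{\scri^+}(\hat\psi)$.

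With these identifications, the energy equality \eqref{energyequality} becomes exactly
$$\left\|\mathcal{T}^+(\hat\psi_0,\hat\psi_1)\right\|_{\mathcal{H}^+} = \left\|(\hat\psi_0,\hat\psi_1)\right\|_{\mathcal{H}},$$
so on the dense subspace $\mathcal{C}_0^\infty(\Sigma_0)\times\mathcal{C}_0^\infty(\Sigma_0)\subset\mathcal{H}$ the operator $\mathcal{T}^+$ is a linear isometry; in particular its image lies in $\mathcal{H}^+$, the scattering energy being finite because it equals the finite Cauchy energy. Being isometric, $\mathcal{T}^+$ is bounded and hence uniformly continuous on this dense subspace, so the bounded linear transformation theorem would give a unique bounded linear extension $\mathcal{T}^+\colon\mathcal{H}\to\mathcal{H}^+$: for $(\hat\psi_0,\hat\psi_1)\in\mathcal{H}$ I would pick any approximating sequence in $\mathcal{C}_0^\infty(\Sigma_0)\times\mathcal{C}_0^\infty(\Sigma_0)$, note that the isometry sends it to a Cauchy sequence in the complete space $\mathcal{H}^+$, and define the image as the limit, which is independent of the chosen sequence. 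Passing to the limit in the displayed identity, legitimate since the norms are continuous, shows that the extension remains norm-preserving, which is the claimed partial isometry.

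I expect no serious analytic obstacle here, since all the substantive work already sits in the energy identity \eqref{energyequality} and in the finiteness of the boundary energies established in Theorem \ref{EQ}. The two points I would be careful about are essentially bookkeeping: matching the normalisation constants so that the squared norms coincide with the energy fluxes on the nose, and verifying that the trace of a finite-energy solution genuinely lands in $\mathcal{H}^+$ rather than in some larger space. I would also stress the word \emph{partial}: at this stage $\mathcal{T}^+$ is an isometry onto a closed subspace of $\mathcal{H}^+$ (the image of the complete space $\mathcal{H}$ under an isometry), but surjectivity onto all of $\mathcal{H}^+$ is not yet available — it amounts to solving the Goursat problem for arbitrary scattering data and is the subject of Section 4.
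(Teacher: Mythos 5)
Your argument is exactly the one the paper intends: the paper's proof is a single sentence declaring the theorem a direct consequence of the energy equality \eqref{energyequality}, and your write-up simply makes explicit the norm--flux identifications and the density/completion step that this relies on. Nothing is missing and nothing diverges from the paper's route.
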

\begin{proof}
The proof of this theorem is a direct consequence of the equality energy \eqref{energyequality} obtained in Theorem \ref{EQ}.
\end{proof}

\section{Conformal scattering theory}
\subsection{Applying the result of H\"ormander in the Schwarzschild background}
H\"ormander \cite{Ho1990} solved the Goursat problem for the second order scalar wave equation with regular first order potentials in the spatially compact spacetime. 
Nicolas \cite{Ni2006} extended the results of H\"ormander with very slightly more regular metric and potential, precisely a $\mathcal{C}^1$ metric and a potential with continuous coefficients of the first order terms and locally $L^\infty$ coefficients for the terms of order $0$.
Here we will apply the results in \cite{Ho1990,Ni2006} to solve the Goursat problem for Equation \eqref{rescaledequation}: 
$$\Box_{\hat{g}}\hat{\psi} + ( 2MR+ f(r,M) )\hat{\psi} = 0,$$
with the smoothtly supported compact initial data on the conformal boundary i.e, $(\xi, \zeta) \in \mathcal{C}_0^\infty(\mathfrak{H}^+) \times \mathcal{C}_0^\infty(\scri^+)$ in Schwarzschild background.

By the same manner in Appendix B \cite{Ni2016}, the conformal compactification domain $\bar{\mathcal{B}}_I$ is embedded partly in a cylindrical globally hyperbolic spacetime by the following technique: let $\mathcal{S}$ be a spacelike hypersurface on $\bar{\mathcal{B}}_I$ such that its intersection with the horizon is the crossing sphere and which crosses $\scri^+$ strictly in the past of the support of the data. We cut off the future $\mathcal{V}$ of a point in $\bar{\mathcal{B}}_I$ lying in the future of the past of the support of the Goursat data and get a spacetime denoted by $\mathfrak{M}$. Then we extend $\mathfrak{M}$ as a cylindrical globally hyperbolic spacetime $(\mathbb{R}_t\times S^3, \mathfrak{g})$. The conformal boundaries $\mathfrak{H}^+\cup \scri^+$ is extended inside $\mathcal{I}^+(\mathcal{S})/\mathcal{V}$ as a null hypersurface $\mathcal{C}$, that is the graph of a Lipschitz function over $S^3$ and the data by zero on the rest of the extended hypersurface.

\begin{figure}[htb]
\begin{center}
\includegraphics[scale=0.5]{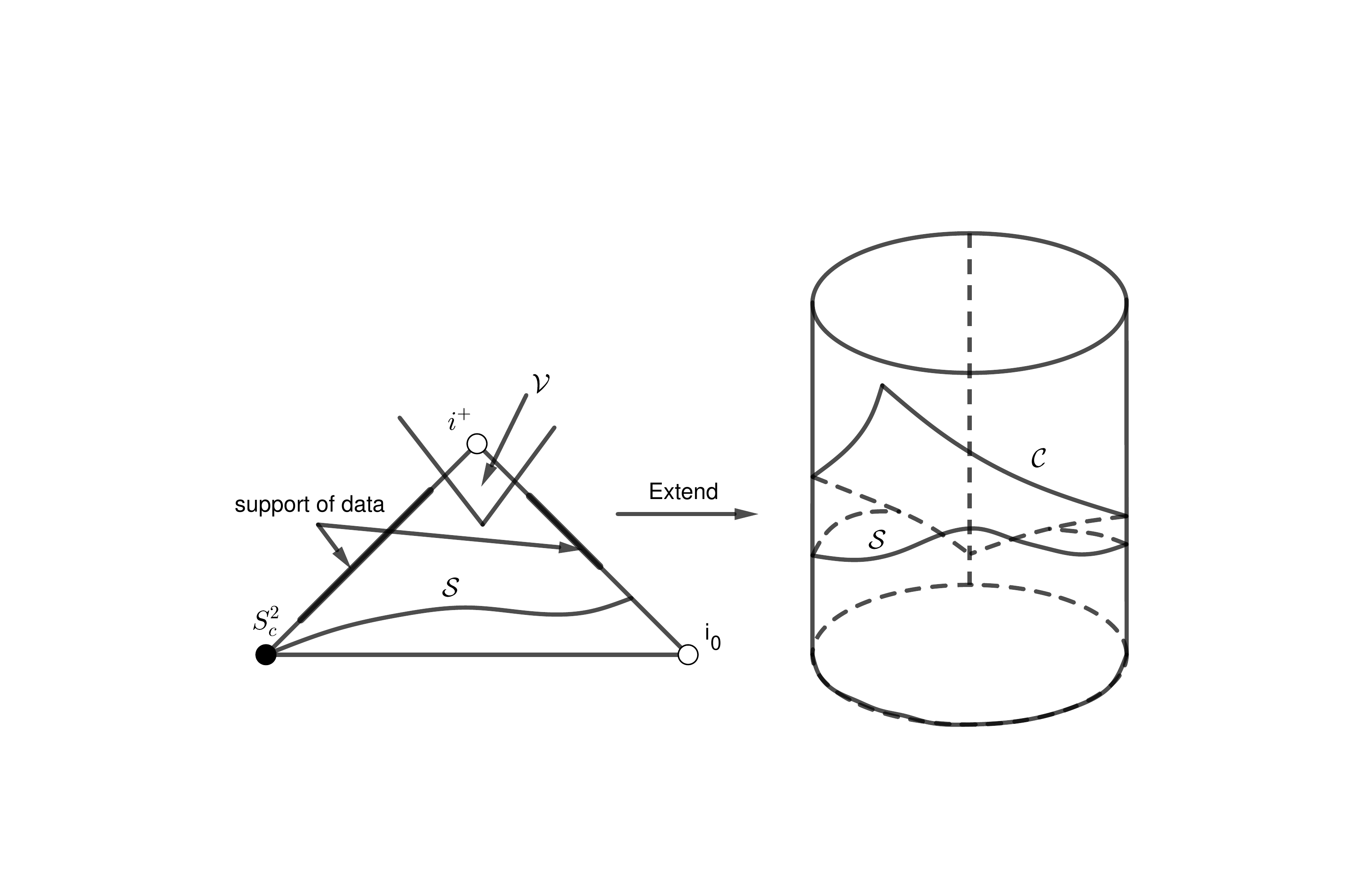}
\caption{The cutting off $\mathcal{V}$ and extension of the future $\mathcal{I}^+(\mathcal{S})$.}
\end{center}
\end{figure}

In the spacetime $(\mathbb{R}_t\times S^3, \mathfrak{g})$, the Equation \eqref{rescaledequation} becomes
\begin{equation}\label{extendeq}
\Box_{\mathfrak{g}}\hat{\psi} + ( 2MR+ f(r,M) )\hat{\psi} = 0.
\end{equation}
Since the function $f(r,M)$ is of order zero in $r$, it is bounded. Therefore, we can apply the results of H\"ormander \cite{Ho1990} to obtain the well-posedness of the Goursat problem of Equation \eqref{extendeq}. 

By local uniqueness and causality, using in particular the fact that as a consequence of the
finite propagation speed, the solution $\psi$ vanishes in $\mathcal{I}^+(\mathcal{S})/\bar{\mathcal{B}}_I$, the Goursat problem of Equation \eqref{rescaledequation} has a unique smooth solution in the future of $\mathcal{S}$, that is the restriction of $\psi$ to $\mathfrak{M}$. Therefore, we have established the following theorem.
\begin{theorem}\label{partlyGoursat}
With the initial data $(\xi, \, \zeta) \in {\mathcal C}_0^{\infty}(\mathfrak{H}^+)\times {\mathcal{C}}_0^\infty(\scri^+)$, the Equation \eqref{rescaledequation} has a unique solution $\hat{\psi}$ satisfying
$$\hat{\psi} \in {\mathcal {C}}(\mathbb{R};\, H^1(\mathcal{S}))\cap \mathcal{C}^1(\mathbb{R}; \, L^2(\mathcal{S}))$$
and
$$\hat{\psi}|_{\scri^+} = \zeta, \, \hat{\psi}|_{\mathfrak{H}^+}= \xi\,.$$
\end{theorem}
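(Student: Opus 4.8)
The plan is to reduce the Goursat problem for Equation \eqref{rescaledequation} on the conformally compactified block $\bar{\mathcal{B}}_I$ to a Goursat problem on a genuinely globally hyperbolic, spatially compact spacetime, where the theorem of H\"ormander \cite{Ho1990} (or its mild generalization by Nicolas \cite{Ni2006}) applies directly. The central difficulty is that $\bar{\mathcal{B}}_I$ is not globally hyperbolic: the conformal boundary $\mathfrak{H}^+\cup\scri^+$ is a null hypersurface, and the points $i^+$ and $i^0$ are singular for the rescaled metric. H\"ormander's theorem requires a smooth (or $\mathcal{C}^1$) Lorentzian metric on $\mathbb{R}_t\times S^3$ with a spacelike Cauchy hypersurface, together with bounded coefficients in the zeroth-order term; the whole construction is about engineering such an ambient spacetime into which the physically relevant part of $\bar{\mathcal{B}}_I$ embeds.

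First I would fix a spacelike hypersurface $\mathcal{S}$ in $\bar{\mathcal{B}}_I$ meeting the horizon at the crossing sphere $S_c^2$ and crossing $\scri^+$ strictly to the past of the (compact) support of the data $(\xi,\zeta)$; this guarantees that the Goursat data lives entirely in $\mathcal{J}^+(\mathcal{S})$. I would then excise the future $\mathcal{V}$ of a point chosen in the future of the past endpoint of the support, so that the truncated region $\mathfrak{M}$ avoids the singular vertex $i^+$ while still containing all of the data. The key geometric step is to extend $\mathfrak{M}$ to a cylindrical globally hyperbolic spacetime $(\mathbb{R}_t\times S^3,\mathfrak{g})$ with $\mathfrak{g}$ at least $\mathcal{C}^1$, extend the null hypersurface $\mathfrak{H}^+\cup\scri^+$ as a Lipschitz null graph $\mathcal{C}$ over $S^3$, and set the Goursat data to zero off the original support. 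Here I must check that the extended zeroth-order coefficient $2MR+f(r,M)$ remains bounded and regular across $\mathcal{C}$; this is exactly where the hypothesis that $f(r,M)$ is of order zero in $r$, hence bounded, is used, and it is what licenses invoking \cite{Ho1990,Ni2006}.

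With the ambient spacetime in hand, H\"ormander's theorem yields a unique solution $\hat{\psi}\in\mathcal{C}(\mathbb{R};H^1(\mathcal{S}))\cap\mathcal{C}^1(\mathbb{R};L^2(\mathcal{S}))$ of Equation \eqref{extendeq} on $\mathbb{R}_t\times S^3$ taking the prescribed data on $\mathcal{C}$. The final step is to transport this back to $\bar{\mathcal{B}}_I$: by finite propagation speed the solution vanishes in $\mathcal{J}^+(\mathcal{S})\setminus\bar{\mathcal{B}}_I$, so its restriction to $\mathfrak{M}$ is supported in the physical region and, by local uniqueness of the characteristic Cauchy problem, does not depend on the arbitrary choices of $\mathcal{S}$, of the excised cone $\mathcal{V}$, or of the extension $\mathfrak{g}$. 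Reading off the traces gives $\hat{\psi}|_{\scri^+}=\zeta$ and $\hat{\psi}|_{\mathfrak{H}^+}=\xi$, establishing the theorem.

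I expect the main obstacle to be the geometric extension step: verifying that $\mathfrak{M}$ genuinely embeds into a smooth (or $\mathcal{C}^1$) spatially compact globally hyperbolic spacetime with $\mathcal{C}$ a Lipschitz Cauchy-type null hypersurface, and that the rescaled metric $\hat{g}$ together with the potential extends with the regularity H\"ormander's framework demands. The wave-equation analysis itself is then essentially a black box; the work is entirely in setting up the correct compactified geometry, exactly as in Appendix B of \cite{Ni2016}, and in confirming that the Schwarzschild rescaling introduces no new singular behavior in the coefficients beyond what that framework tolerates.
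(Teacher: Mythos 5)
Your proposal follows essentially the same route as the paper: embedding the truncated region $\mathfrak{M}$ into a cylindrical globally hyperbolic spacetime $(\mathbb{R}_t\times S^3,\mathfrak{g})$ as in Appendix B of \cite{Ni2016}, extending the conformal boundary as a Lipschitz null graph $\mathcal{C}$ with the data extended by zero, invoking H\"ormander's theorem (using the boundedness of $f(r,M)$), and recovering the solution on $\bar{\mathcal{B}}_I$ by finite propagation speed and local uniqueness. The argument is correct and matches the paper's proof step for step.
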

\begin{remark}
We have solved the partly Goursat problem for Equation \eqref{rescaledequation} in the future $\mathcal{I}^+(\mathcal{S})$ of the spacelike hypersurface $\mathcal{S}$. The Goursat problem will be solved completely by extending the solution down to $\Sigma_0$, this will be done in the next section.
\end{remark}

\subsection{Goursat problem and conformal scattering operator}
In order to obain the conformal scattering operator, we need to show that the trace operator is subjective. This corresponds to solve the Goursat problem for Equation \eqref{rescaledequation} with the initial data on the conformal boundary $\mathfrak{H}^+\cup \scri^+$ (resp. $\mathfrak{H}^-\cup \scri^-$) in the conformal compactification domain $\bar{\mathcal{B}}_I$.
%Since the function $f(r,M)$ is negative, bounded lower and of order zero in $r$, the equation \eqref{rescaledequation} is very likely the rescaled wave equation. Therefore the Goursat problem of the rescaled equation\eqref{rescaledequation} is solved by the same way in \cite{Ni2016} (see Section 4.1 and Appendix) for the rescaled wave equation. The method is based on the results of H\"ormander \cite{Ho1990}.
\begin{theorem}
The Goursat problem of Equation \eqref{rescaledequation} is well-posed i.e. for the initial data $(\xi,\zeta)\in C_0^\infty(\mathfrak{H}^+)\times C_0^\infty(\scri^+)$, there exists a unique solution of \eqref{rescaledequation} such that
$$(\hat\psi,\partial_t\hat\psi)\in \mathcal{C}(\mathbb{R}_t;\mathcal{H}) \hbox{  and  } \mathcal{T}^+(\hat\psi|_{\Sigma_0}, \partial_t \hat\psi|_{\Sigma_0}) = (\xi,\zeta).$$
This means that the trace operator $\mathcal{T}^+: \mathcal{H} \to \mathcal{H}^+$ is subjective and and together with Theorem \ref{Trace}, this yields that the trace operator $\mathcal{T}^+$ is an isometry.
\end{theorem}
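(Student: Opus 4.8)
The plan is to establish surjectivity first on the dense subspace of smooth compactly supported boundary data, and then to propagate it to all of $\mathcal{H}^+$ using a density argument together with the closed-range property that follows from the isometry in Theorem \ref{Trace}. The whole scheme hinges on gluing the partly-Goursat solution of Theorem \ref{partlyGoursat} to a backward Cauchy evolution, with the energy equality of Theorem \ref{EQ} supplying the quantitative control that makes the gluing land inside $\mathcal{H}$.

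First I would fix $(\xi,\zeta) \in \mathcal{C}_0^\infty(\mathfrak{H}^+) \times \mathcal{C}_0^\infty(\scri^+)$ and invoke Theorem \ref{partlyGoursat} to produce a unique smooth solution $\hat\psi$ of the rescaled Equation \eqref{rescaledequation} in the future $\mathcal{I}^+(\mathcal{S})$ of the auxiliary spacelike hypersurface $\mathcal{S}$, with boundary traces $\hat\psi|_{\mathfrak{H}^+} = \xi$ and $\hat\psi|_{\scri^+} = \zeta$. Since this solves the Goursat problem only in the future of $\mathcal{S}$, the next task is to extend the solution backward to the initial slice $\Sigma_0$. I would read off the Cauchy data $(\hat\psi|_{\mathcal{S}}, \partial_t\hat\psi|_{\mathcal{S}}) \in H^1(\mathcal{S}) \times L^2(\mathcal{S})$ furnished by Theorem \ref{partlyGoursat}, and then apply the well-posedness of the Cauchy problem in $\mathcal{H}$ (the Proposition of Section 3.2, obtained via Leray's theorem) to solve Equation \eqref{rescaledequation} in both time directions. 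This extends $\hat\psi$ to a solution on all of $\bar{\mathcal{B}}_I$ with $(\hat\psi, \partial_t\hat\psi) \in \mathcal{C}(\mathbb{R}_t; \mathcal{H})$. By finite propagation speed and local uniqueness the extension agrees with the Goursat solution on $\mathcal{I}^+(\mathcal{S})$, so its boundary traces remain $(\xi,\zeta)$, and its restriction to $\Sigma_0$ defines Cauchy data $(\hat\psi_0,\hat\psi_1) := (\hat\psi|_{\Sigma_0}, \partial_t\hat\psi|_{\Sigma_0})$. The energy identity up to $i^+$ of Theorem \ref{EQ} then forces this data to have finite $\mathcal{H}$-norm, since $\hat{\mathcal{E}}^{\partial_t}_{\Sigma_0}(\hat\psi) = \hat{\mathcal{E}}^{\partial_t}_{\mathfrak{H}^+}(\hat\psi) + \hat{\mathcal{E}}^{\partial_t}_{\scri^+}(\hat\psi) = \|(\xi,\zeta)\|_{\mathcal{H}^+}^2 < \infty$, whence $(\hat\psi_0,\hat\psi_1)\in\mathcal{H}$ and $\mathcal{T}^+(\hat\psi_0,\hat\psi_1) = (\xi,\zeta)$.

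To pass from smooth compactly supported data to arbitrary elements of $\mathcal{H}^+$, I would argue by density and closedness of the range. The construction above shows that the range of $\mathcal{T}^+$ contains $\mathcal{C}_0^\infty(\mathfrak{H}^+) \times \mathcal{C}_0^\infty(\scri^+)$, which is dense in $\mathcal{H}^+$ by the very definition of that space as a completion. On the other hand, Theorem \ref{Trace} shows that $\mathcal{T}^+$ is an isometry from the Hilbert space $\mathcal{H}$ into $\mathcal{H}^+$, so its range is complete, hence closed. A closed subspace containing a dense set is the whole space, so $\mathcal{T}^+$ is surjective; being an injective, norm-preserving surjection, it is an isometric isomorphism, which is precisely the asserted conclusion.

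The main obstacle I anticipate is the backward-extension step: justifying rigorously that the Goursat solution of Theorem \ref{partlyGoursat}, which a priori lives only in $\mathcal{I}^+(\mathcal{S})$, extends consistently down to $\Sigma_0$ as a finite-energy solution whose boundary traces are unchanged. This requires combining the backward Cauchy evolution with finite speed of propagation and local uniqueness to rule out any loss of information across $\mathcal{S}$, and it is exactly here that the energy equality of Theorem \ref{EQ} is indispensable, since it is what converts the finiteness of the boundary energy $\|(\xi,\zeta)\|_{\mathcal{H}^+}$ into finiteness of the Cauchy energy on $\Sigma_0$, placing the recovered data in $\mathcal{H}$.
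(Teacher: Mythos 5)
Your overall architecture matches the paper's: solve the partly Goursat problem in $\mathcal{I}^+(\mathcal{S})$ via Theorem \ref{partlyGoursat}, extend the solution down to $\Sigma_0$, conclude that the range of $\mathcal{T}^+$ contains $\mathcal{C}_0^\infty(\mathfrak{H}^+)\times\mathcal{C}_0^\infty(\scri^+)$, and finish by density plus the closed range coming from the isometry of Theorem \ref{Trace}. That last paragraph of yours is fine (the paper leaves the density/closed-range step implicit, but it is the standard completion of the argument). The problem is in the middle step, which you correctly identify as the main obstacle but then do not actually resolve.

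Two things go wrong in your backward-extension step. First, you propose to ``apply the well-posedness of the Cauchy problem in $\mathcal{H}$'' to the data $(\hat\psi|_{\mathcal{S}},\partial_t\hat\psi|_{\mathcal{S}})$. But the Proposition of Section 3.2 concerns Cauchy data on the slices $\Sigma_t=\{t=T\}$, in the space $\mathcal{H}$ defined as a completion of $\mathcal{C}_0^\infty(\Sigma_0)\times\mathcal{C}_0^\infty(\Sigma_0)$; the hypersurface $\mathcal{S}$ is not such a slice --- it meets the horizon at the crossing sphere and crosses $\scri^+$, and the Goursat solution has a \emph{nonzero trace} on $\mathcal{S}\cap\scri^+$ (equal to the trace of $\zeta$ there). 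Evolving this data backward naively runs into the degeneracy of the conformal metric at $i_0$, which is precisely the difficulty the paper's proof is organized around: one approximates $\hat\psi|_{\mathcal{S}}$ in $H^1(\mathcal{S})$ (and $\partial_t\hat\psi|_{\mathcal{S}}$ in $L^2(\mathcal{S})$) by smooth data $(\hat\psi^n_{0,\mathcal{S}},\hat\psi^n_{1,\mathcal{S}})$ supported \emph{away from} $\scri^+$, so that each $\hat\psi^n$ vanishes near $i_0$ and the energy identity $\mathcal{E}^{\partial_t}(\mathcal{S},\hat\psi^n)=\mathcal{E}^{\partial_t}(\Sigma_t,\hat\psi^n)$ is legitimate; the uniformity of these identities then gives convergence of $(\hat\psi^n,\partial_t\hat\psi^n)$ in $\mathcal{C}(\mathbb{R}_t;\mathcal{H})$, and local uniqueness identifies the limit with $\hat\psi$ in the future of $\mathcal{S}$. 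Second, your claim that Theorem \ref{EQ} ``forces'' $(\hat\psi_0,\hat\psi_1)$ to have finite $\mathcal{H}$-norm is circular as stated: the energy identity \eqref{energyequality} is established for solutions of the \emph{Cauchy} problem with suitable data, so invoking it presupposes exactly what you are trying to prove, namely that the Goursat solution arises from finite-energy data on $\Sigma_0$. In the paper the finiteness of the $\Sigma_0$ energy comes instead from the $\mathcal{S}$-to-$\Sigma_0$ energy identity for the approximating sequence, not from Theorem \ref{EQ}. So the missing idea is the approximation of the data on $\mathcal{S}$ by data supported away from $\scri^+$ together with the energy estimates between $\mathcal{S}$ and the slices $\Sigma_t$; without it the extension step does not close.
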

\begin{proof}
%First, we consider $\mathcal{S}$ a spacelike hypersurface for $\hat{g}$ on $\bar{\mathcal{B}_I}$ that crosses $\scri^+$ in the part of the support of $\zeta$ and meets the horizon at the crossing sphere. Since the support of $\zeta$ on $\scri^+$ is compact, we can apply the result of H\"ormander \cite{Ho1990} i.e, there exists a unique solution $\hat\Phi$ of \eqref{rescaledequation}
%which satisfies that
Following Theorem \ref{partlyGoursat} there exists a unique solution $\hat{\psi}$ of \eqref{rescaledequation} which satisfies that
\begin{itemize}
\item[$\bullet$] $\hat{\psi} \in H^1(\mathcal{I}^+(\mathcal{S}))$, where $\mathcal{I}^+(\mathcal{S})$ is the causal future of $\mathcal{S}$ in $\bar{\mathcal{B}_I}$. Since the support of the initial data is compact, the solution vanishes in the neigbourhood of $i^+$. Then we do not need to distinguish between $H^1(\mathcal{I}^+(\mathcal{S}))$ and $H^1_{loc}(\mathcal{I}^+(\mathcal{S}))$. 

\item[$\bullet$] $\hat\psi$ is continuous in $\tau$ with values in $H^1$ of the slices and $C^1$ in $\tau$ with values in $L^2$ of the slices.

\item[$\bullet$] $\hat{\psi}|_{\scri^+} = \zeta, \, \hat{\psi}|_{\mathfrak{H}^+}=\xi$.
\end{itemize}

\begin{figure}[htb]
\begin{center}
\includegraphics[scale=0.5]{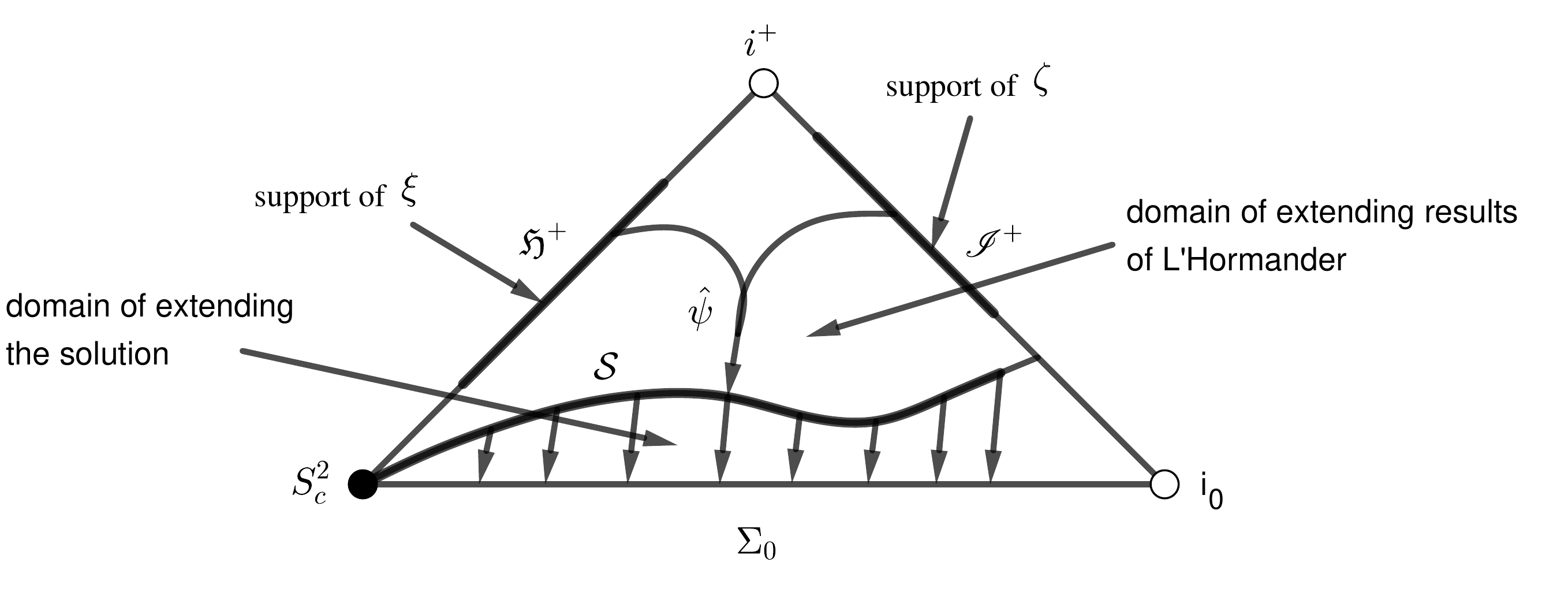}
\caption{Extending the solution $\hat{\psi}$ down to $\Sigma_0$.}
\end{center}
\end{figure}

We need to extend the solution down to $\Sigma_0$ in a manner that avoids the singularity at $i_0$.
Since the hypersurface $\mathcal{S}$ intersects the horizon at the crossing sphere and intersects $\scri^+$ strictly in the past of the support of the data, we have that the restriction of $\hat{\psi}$ to $\mathcal{S}$ is in $H^1(\mathcal{S})$ and its trace on $\mathcal{S}\cap \scri^+$ is also the trace of $\zeta$ on $\mathcal{S}\cap \scri^+$. Therefore, we have that $\hat{\psi}|_{\mathcal{S}}$ can be approached by a sequence $\left\{\hat{\psi}^n_{0,\mathcal{S}} \right\}_{n\in \mathbb{N}}$ of the smooth functions on $\mathcal{S}$ supported away from $\scri^+$ that converge towards $\hat{\psi}|_{\mathcal{S}}$ in $H^1(\mathcal{S})$. Moreover, $\partial_t\hat{\psi}|_{\mathcal{S}}$ can be approached by a sequence $\left\{\hat{\psi}^n_{1,\mathcal{S}} \right\}_{n\in \mathbb{N}}$ of the smooth functions on $\mathcal{S}$ supported away from $\scri^+$ that converge towards $\partial_t\hat{\psi}|_{\mathcal{S}}$ in $L^2(\mathcal{S})$.
Consider $\hat{\psi}^n\,,$ the smooth solution of \eqref{rescaledequation} on $\bar{\mathcal{B}_I}$ with data $(\hat{\psi}^n_{0,\mathcal{S}},\hat{\psi}^n_{1,\mathcal{S}})$ on $\mathcal{S}$. This solution vanishes in the neighbourhood of $i_0$ and we can therefore perform energy estimates for $\hat{\psi}^n$ between $\mathcal{S}$ and $\Sigma_0$, then we have the energy identity
\begin{equation}
\mathcal{E}^{\partial_t}(\mathcal{S},\hat{\psi}^n) = \mathcal{E}^{\partial_t}(\Sigma_0,\hat{\psi}^n).
\end{equation}
Similar energy identities between $\mathcal{S}$ and the hypersurfaces $\Sigma_t$ entail that $(\hat{\psi}^n,\partial_t\hat{\psi}^n)$ converges in $\mathcal{C}(\mathbb{R}_t,\mathcal{H})$ towards $(\hat{\phi}, \partial_t\hat{\phi})$, where $\hat{\phi}$ is a solution of \eqref{rescaledequation}. By local uniqueness $\hat{\phi}$ coincides with $\hat{\psi}$ in the future of $\mathcal{S}$. Therefore, if we denote
$$\hat{\psi}_0 = \hat{\psi}|_{\Sigma_0},\,\, \hat{\psi}_1 = \partial_t\hat{\psi}|_{\Sigma_0},$$
we have
$$(\hat{\psi}_0,\hat{\psi}_1)\in \mathcal{H} \hbox{  and  } (\xi,\zeta) = \mathcal{T}^+(\hat{\psi}_0,\hat{\psi}_1).$$
Therefore, the range of $\mathcal{T}^+$ contains $\mathcal{C}_0^\infty(\mathfrak{H}^+)\times \mathcal{C}_0^\infty(\scri^+)$ and $\mathcal{T}^+$ is subjectie.
\end{proof}
We now define the conformal scattering operator for the Regge-Wheeler and Zerilli equations as follows.
\begin{definition}
Similarly, we introduce the past trace operator $\mathcal{T}^-$ and the space $\mathcal{H}^-$ of past scattering data on the past horizon and the past null infinity. We define the scattering operator $S$ as the operator that, to the past scattering data associates the future scattering data, i.e.
$$S:= \mathcal{T}^+\circ (\mathcal{T}^-)^{-1}.$$
\end{definition}

\begin{remark}

\item[$\bullet$] Before the first version of this work on ARXIV, there was not an analytic scattering theory for the linearized gravity fields on Schwarzschild spacetime. 
Conformal and analytic scattering are two different approaches to scattering theory. The first approach is based on spectral techniques and the last one uses the Penrose conformal compactification and energy method. The main difference between two approaches is that conformal scattering gives the scattering construction by solving the Goursat problem on the conformal boundary while analytic analytic scattering considers the scattering as asymptotic regions. One can understand the relation between the two approaches by re-interpreting the conformal scattering by analytic scattering (this work is done for linear scalar fields in Section 4.2 in \cite{Ni2016} ). Since that, our construction on the conformal scattering theory for the linearized gravity fields can also lead to understand the analytic aspect of the scattering theory that is defined in terms of wave operator.

\item[$\bullet$] The extensions of this work on the other symmetric spherical spacetimes such as Reissner-Nordstr\"om-de Sitter black hole can be done by the same way: first, extend the work of Anderson et al. \cite{ABlu} to obtain the decay results (where the recent works \cite{El2020,El2020'} of Giorgi can be useful) and then using these results to the construction, where it remains useful to use the time-like Killing vector field $\partial_t$ to obtain the energy of the fields. However, the extension on the stationary and non symmetric spherical spacetimes such as Kerr spacetimes is more complicated. In Kerr spacetimes, the existence of the orbiting null geodesics and the fact that the vector field $\partial_t$ is no longer global time-like in the exterior of the black hole lead an issue that the conserved energies on the Cauchy hypersurfaces of the spacetime is not defined as usual. Some recent results about the uniform bounded energy and the pointwise decay of the fields satisfying the Teukolsky equations of Andersson et al. \cite{ABlu2} and also Dafermos et al. \cite{Da2020} can be useful. 

\item[$\bullet$] Equations \eqref{rescaledequation} are very similar to the rescaled wave equations, hence they also satisfy the peeling property (which is another aspect of the conformal asymptotic analysis). The method is done by the same way for the linear scalar fields on Schwarzschild spacetimes in \cite{MaNi2009} and extended to Kerr spacetimes in \cite{NiXu2019,Pha2019}.

\end{remark}


\begin{thebibliography}{100}


\bibitem{ABlu} L. Andersson, P. Blue and J. Wang, {\em Morawertz estimate for linearized gravity in Schwarzschild}, Ann. Henri Poincar\'e 21 (2020), 761-813, arXiv:1708.06943.

\bibitem{ABlu2} L. Andersson, T. Backdahl, P. Blue and S. Ma, {\em Stability for linearized gravity on the Kerr spacetime}, arXiv:1903.03859.

\bibitem{An2020} Y. Angelopoulos, S. Aretakis and D. Gajic, {\em A Non-degenerate Scattering Theory for the Wave Equation on Extremal Reissner–Nordstr\"om}, Communications in Mathematical Physics {\bf 380}, pages 323-408 (2020).

\bibitem{Ba1991} A. Bachelot, {\em  Gravitational scattering of electromagnetic field by Schwarzschild black hole}, Annales de l'I.H.P. A {\bf 54} (1991), 261-320.

\bibitem{Ba1989a} J. C. Baez, {\em Scattering and the geometry of the solution manifold of $\square f+ \lambda f^3 =0$}, J. Funct. Anal. {\bf 83} (1989), 317-332.


\bibitem{BaSeZho1990}  J.C. Baez, I.E. Segal and Z. F. Zhou, {\em The global Goursat problem and scattering for nonlinear wave equations}, J. Funct. Anal. {\bf 93} (1990), 2, 239-269.

\bibitem{Cha1975} S. Chandrasekhar, {\it On the equations governing the perturbations of the Schwarzschild black hole}, Proc. R . Soc. Lond. A. {\bf 343}, 289-298 (1975).

\bibitem{Cha} S. Chandrasekhar, {\em The mathematical theory of black holes}, Oxford University Press 1983.

\bibitem{Da2018} M. Dafermos, I. Rodnianski, Y. Shlapentokh-Rothman, {\em A scattering theory for the wave equation on Kerr black hole exteriors}, Annales Scientifiques de l'Ecole Normale Superieure, Vol. 51, Iss.2, 371-486 (2018), arXiv:1412.8379.

\bibitem{Da2019} M. Dafermos, G. Holzegel and I. Rodnianski, {\em  The linear stability of the Schwarzschild solution to gravitational perturbations}, Acta Math., 222 (2019), 1–214.

\bibitem{Da2020} M. Dafermos, G. Holzegel and I. Rodnianski, {\em Boundedness and Decay for the Teukolsky Equation on Kerr
Spacetimes I: The Case $|a|\ll M$}, Annals of PDE, 1-118 (2019) 5:2.


\bibitem{Di1985} J. Dimock, {\em Scattering for the wave equation on the Schwarzschild metric}, Gen. Rel. Grav. {\bf 17} (1985), 4, 353--369.

\bibitem{DiKa1986} J. Dimock and B.S. Kay, {\em Classical and quantum scattering theory for linear scalar fields on the Schwarzschild metric. II}, J. Math. Phys. {\bf 27} (1986), 10, 2520--2525.

\bibitem{DiKa1987} J. Dimock and B.S. Kay, {\em Classical and Quantum Scattering theory for linear scalar fields on the Schwarzschild metric I}, Ann. Phys. {\bf 175} (1987),  366--426.


\bibitem{El2020} E. Giorgi, {\em Boundedness and decay for the Teukolsky system of spin 2 on Reissner-Nordström spacetime: the case of small charge}, 	arXiv:1811.03526.

\bibitem{El2020'} E. Giorgi, {\em The linear stability of Reissner-Nordström spacetime: the full subextremal range}, arXiv:1910.05630.

\bibitem{Fri1962} F.G. Friedlander, {\em On the radiation field of pulse solutions of the wave equation I}, Proc. Roy. Soc. Ser. A {\bf 269} (1962), 53--65.

\bibitem{Fri1964} F.G. Friedlander, {\em On the radiation field of pulse solutions of the wave equation II}, Proc. Roy. Soc. Ser. A {\bf 279} (1964), 386--394.

\bibitem{Fri1967} F.G. Friedlander, {\em On the radiation field of pulse solutions of the wave equation III}, Proc. Roy. Soc. Ser. A {\bf 299} (1967), 264--278.

\bibitem{Fri1980} F.G. Friedlander, {\em Radiation fields and hyperbolic scattering theory}, Math. Proc. Camb. Phil. Soc. {\bf 88} (1980), 483-515.

\bibitem{Fri2001} F.G. Friedlander, {\em Notes on the Wave Equation on Asymptotically Euclidean Manifolds}, J. Funct. Anal. {\bf 184} (2001), 1-18.

\bibitem{Ho1990} L. H\"ormander, {\em A remark on the characteristic Cauchy problem}, J. Funct. Anal. {\bf 93} (1990), 270--277.

\bibitem{Jo2012} J. Joudioux, {\em Conformal scattering for a nonlinear wave equation}, J. Hyperbolic Differ. Equ. {\em 9} (2012), 1, 1-65.

\bibitem{Jo2019} J. Joudioux, {\em H\"ormander's method for the characteristic Cauchy problem and conformal scattering for a non linear wave equation}, Lett. Math. Phys (2020), arXiv:1903.12591.

\bibitem{Ke2019} C. Kehle and Y. Shlapentokh-Rothman, {\em A Scattering Theory for Linear Waves on the
Interior of Reissner-Nordstr\"om Black Holes}, Ann. Henri Poincar\'e 20 (2019), 1583-1650.

\bibitem{LaPhi} P.D. Lax, R.S. Phillips, {\em Scattering theory}, Pure and Applied Mathematics, Vol. 26, Academic Press, New York-London, 1967, xii + 276 pp.(1 plate) pages.

\bibitem{Lu2010} J. Luk, {\em Improved decay for solutions to the linear wave equation on a Schwarzschild black hole}, Ann. Henri Poincar\'e {\bf 11}, 805-880 (2010).

\bibitem{MaNi2004} L.J. Mason, J.-P. Nicolas, {\em Conformal scattering and the Goursat problem}, J. Hyperbolic Differ. Equ. {\bf 1} (2004), 2, 197-233.

\bibitem{Mo2019} M. Mokdad, {\em Conformal Scattering of Maxwell fields on Reissner-Nordstr\"om-de Sitter Black Hole Spacetimes}, Annales de l'institut Fourier, {\bf 69} (2019), 5, 2291-2329.  

\bibitem{MaNi2009} L.J. Mason and J.-P. Nicolas, {\em Regularity an space-like and null infinity}, J. Inst. Math. Jussieu {\bf 8} (2009), 1, 179-208.


\bibitem{Ni1995} J.-P. Nicolas, {\em Non linear Klein-Gordon equation on Schwarzschild-like metrics}, J. Math. Pures Appl. 74 (1995), p. 35-58.

\bibitem{Ni2006} J.-P. Nicolas, {\em On Lars H\"ormander's remark on the characteristic Cauchy problem}, Annales de l'Institut Fourier, {\bf 56} (2006), 3, 517-543.

\bibitem{Ni2016} J.-P. Nicolas, {\em Conformal scattering on the Schwarzschild metric}, Annales de l'Institut Fourier, {\bf 66} (2016), 3, 1175-1216.

\bibitem{NiXu2019} J.-P. Nicolas and T.X. Pham, {\em Peeling on Kerr spacetime: linear and non linear scalar fields},  Annales Henri Poincar\'e, Vol. 20, Issue 10 (2019), p. 3419-3470.

\bibitem{Pe1964} R. Penrose, {\em Conformal approach to infinity}, in Relativity, groups and topology, Les Houches 1963, ed. B.S. De Witt and C.M. De Witt, Gordon and Breach, New-York, 1964.

\bibitem{PeRi84} R. Penrose, W. Rindler, {\em Spinors and space-time}, Vol. I \& II, Cambridge monographs on mathematical physics, Cambridge University Press 1984 \& 1986.

\bibitem{Pha2017} T.X. Pham, {\em Peeling and conformal scattering on the spacetimes of the general relativity}, Phd's thesis, Brest university (France) (4/2017), https://tel.archives-ouvertes.fr/tel-01630023/document.

\bibitem{Pha2019} T.X. Pham, {\em Peeling of Dirac field on Kerr spacetime}, Journal of Mathematical Physics 61, 032501 (2020); https://doi.org/10.1063/1.5121433.

\bibitem{ReWhe} T. Regge and John A. Wheeler, {\it Stability of a Schwarzschild singularity}, Physical Review 108 (1957), 1063-1069.

\bibitem{Ta2019} G. Taujanskas, {\em Conformal scattering of the Maxwell-scalar field system on de Sitter space}, Journal of Hyperbolic Differential Equations, Vol. 16, No. 04, pp. 743-791 (2019).


\bibitem{Teu1973} S. A. Teukolsky, {\em Perturbations of a rotating black hole. I. Fundamental equations for
gravitational, electromagnetic, and Neutrino-field perturbations}, Astrophysical J. 185
(1973), 635-648.

\bibitem{Vis} C. V. Vishveshwara, {\em Stability of the Schwarzschild metric}, Physical Review D 1 (1970), 2870-2879.

\bibitem{Ze} Frank J. Zerilli, {\em Effective potential for even-parity Regge-Wheeler gravitational perturbation equations}, Physical Review Letters 24 (1970), no. 13, 737.

\end{thebibliography}
\end{document}